\newcommand{\beq}{\begin{equation}}
\newcommand{\eeq}{\end{equation}}
\newcommand{\commD}[1]{\marginpar{%
\begin{color}{red}
\vskip-\baselineskip 
\raggedright\footnotesize
\itshape\hrule \smallskip D: #1\par\smallskip\hrule\end{color}}}
\newcommand{\setT}{\mathcal{T}}
\newcommand{\E}{\mathbbm{E}}
\newcommand{\expect}[1]{{\mathbb E}\left\{{\displaystyle #1}\right\}}
\def\Pm{P_{\max}}
\newtheorem{theorem}{Theorem}[section]
\newtheorem{corollary}{Corollary}
\newtheorem{definition}{Definition}
\newtheorem{lemma}{Lemma}
\newtheorem{remark}{Remark}
\newif\ifnotesw
\newif\ifnotesw
\def\CL{C_l}
\def\CS{{\cal S}}
\def\E{\mathbf{E}}
\def\CI{{\cal I}}
\begin{document}
%
\title{Characterizing the Energy Trade-Offs of End-to-End Vehicular Communications using an Hyperfractal Urban Modelling }
%
%
%

\author{Dalia~Popescu, 
        Philippe~Jacquet,~\IEEEmembership{Fellow,~IEEE,}
        Bernard~Mans, 
        Bart{\l}omiej~B{\l}aszczyszyn
        
\thanks{B.  Mans  was  supported  in  part  by  the  Australian  Research  Council  under Grant DP170102794.

Part of the work has been done at Lincs.
Dalia Popescu was with Nokia Bell Labs,  91620 Nozay, France. Bernard Mans is with Macquarie University, Sydney, Australia (bernard.mans@mq.edu.au). 
Philippe Jacquet and Bart{\l}omiej~B{\l}aszczyszyn are with INRIA, France.}
}

%
%


\maketitle

\begin{abstract}

We characterize trade-offs between the end-to-end communication delay and the energy in urban vehicular communications with infrastructure assistance. 
Our study exploits the self-similarity of the location of communication entities in cities by modeling them  with an innovative model called ``hyperfractal''. We show that 
the hyperfractal model can be extended to incorporate road-side infrastructure and provide stochastic geometry tools to allow a rigorous analysis. 
We compute theoretical bounds for the end-to-end communication hop count considering two different energy-minimizing goals: either total accumulated energy or maximum energy per node. 
We prove that the hop count for an end-to-end transmission is bounded by $O(n^{1-\alpha/(d_F-1)})$ where $\alpha<1$ and $d_F>2$ is the fractal dimension of the mobile nodes process.
This proves that for both constraints the energy decreases as we allow choosing routing paths of higher length. The asymptotic limit of the energy becomes significantly small when
 the number of nodes becomes asymptotically large.
A lower bound on the network throughput capacity with constraints on path energy is also  given. We show that our model fits real deployments where open data sets are available.
 The results are confirmed through 
 simulations using different fractal dimensions in a Matlab simulator.
\end{abstract}
\maketitle

\begin{IEEEkeywords}
Wireless Networks; Delay; Energy; Fractal; Vehicular Networks; Urban networks.
\end{IEEEkeywords}

%
\IEEEpeerreviewmaketitle

\section{Introduction}
\subsection{Motivation and Background}
Vehicular communications, V2V (vehicle-to-vehicle), V2I (vehicle to infrastructure) or V2X (vehicle to everything), are a key component of the 5th Generation (5G) and beyond communications. These `verticals' represent one major focus of the telecommunication industry nowadays. Yet like many innovation opportunities on the horizon they arrive with significant challenges. 
As the vehicular networks continue to scale up to reach tremendous network sizes with diverse hierarchical structures and node types, and as vehicular interactions become more complex with entities having hybrid functions and levels of intelligence and control, it is paramount to provide an effective integration of vehicular networks within the complex urban environment. 
Automated and autonomous driving in such a complex and evolving environment requires sensors that generate a huge amount of data demanding high bandwidth and data rates \cite{mm_vehicular2}. Furthermore, an effective integration of these new types of communication in the new radio ``babbling" created by the other 5G actors such as evolved mobile broadband, ultra-reliable low latency communications and massive machine-type communications, requires a careful design for optimal connectivity, low interference, and maximum security. 





5G NR (5th Generation New Radio) is essentially a multi-beam system, with high-frequency ranges generated by millimeter-wave (mmWave) technology  \cite{FR2}. With many GHz of spectrum to offer, millimeter-wave bands are the key for attaining the high capacity and services diversity of the NR. For a long time these frequencies have been disregarded for cellular communications due to their large near-field loss, and poor penetration (blocking) through
common material, yet recent research and experiments have shown that communications are feasible in ranges of 150-200 meters dense
urban scenarios with the use of such high gain directional
\cite{mmWave_3}. Furthermore, the tight requirements (e.g, line of sight, short-range) are easily answered as the embedding space of vehicular networks leads to a highly directive topology (as much as it is possible, roads are built as straight lines) \cite{Bartek}.

Given the numerous challenges of mmWave \cite{new_mm} and the important place the vehicular communications hold in the new communications era, realistic modeling of the topology for accurate estimation of network metrics is mandatory.  
The research community has proposed stochastic models that usually fit with high precision cellular networks or ad hoc networks. Yet for vehicles, and more importantly, for vehicles using mmWave technology, this cannot be done without taking into account the crucial fact that the effectiveness of the communications are influenced by the environmental topology. Cars are located on streets and streets are conditioned by a world-wide common urban architecture that has interesting features. 
One major feature of the urban architecture that we exploit in this work is self-similarity. 

While it has been extensively studied in diverse research fields such as biology and chemistry, self-similarity has been only recently introduced to wireless communications, after understanding that the device-to-device communication topologies follow the social human topology.
Self-similarity is present in every aspect of the surrounding environment but is particularly emphasized in the urban environment. The hierarchic organization with different degrees of scaling of cities is a perfect illustration of the fractal structure of human society \cite{Batty2008TheSS}. Figure \ref{fig:min_data} presents a snapshot of the traffic in a neighborhood of Minneapolis. Common patterns and hierarchical organizations can easily be identified in the traffic measurements and shall be further explained in this paper. 

\begin{figure}[httb]\centering
\includegraphics[scale=0.45, trim=0cm 0.5cm 0cm 0cm]{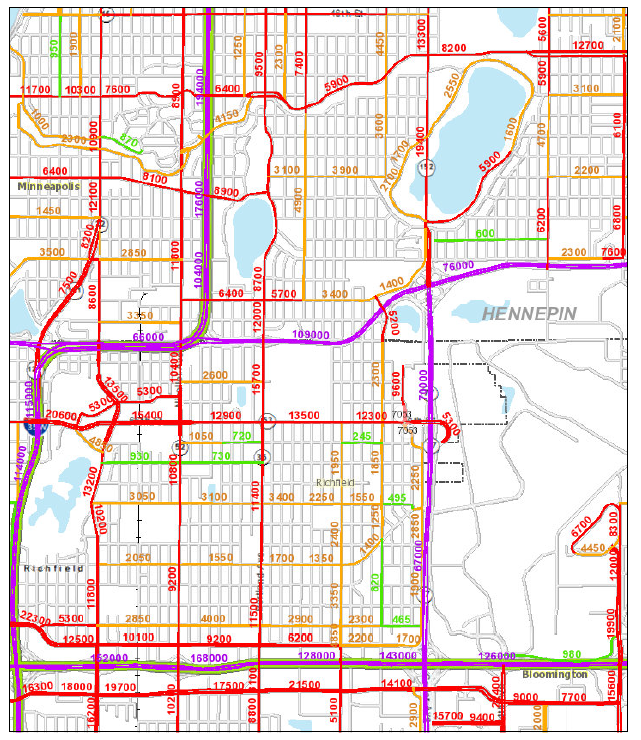}
\caption{Minneapolis traffic snapshot}
\vspace{-0.09cm}
\label{fig:min_data}
\end{figure}

In this paper, we extend the "hyperfractal" model that we have introduced in~\cite{spaswin,gsi} to better capture the impact of the network topology on the fundamental performance limits of end-to-end communications over vehicular networks in urban settings. The model consists of assigning decaying traffic densities to city streets, thus avoiding the extremes of  regularity (e.g. Manhattan grid) and uniform randomness (e.g. Poisson point process), the fitting of the model with traffic data of real cities having been showcased in~\cite{archiv17}. The hyperfractal model exploits the self-similarity: e.g., it is characterized by a dimension that is larger than the dimension of the euclidean dimension of the embedding space, that is larger than 2 when the whole network lays in a 2-dimensional plane.

Our previous results in~\cite{spaswin} revealed that, for nodes, the number of hops in a routing path between an arbitrary source-destination pair increases as a power function of the population $n$ of nodes when $n$ tends to infinity. However, we showed that the exponent tends to zero when the fractal dimension tends to infinity. 
An initial observation for this model is that the optimal path may have to go through streets of low density where inter-vehicle distance can become large, therefore the transmission becomes expensive in terms of energy cost. Hence, in this paper, the focus will be on the study of the relationship between efficient communications and energy costs. 

\subsection{Contributions and paper organization:} 
Our goal is to characterize trade-offs between the end-to-end communication delay and the energy in urban vehicular communications with infrastructure assistance in modern cities. 

Our first contribution is to exploit the self-similarity of the location of the traffic and vehicles in cities by modeling the communication entities and relationships with an innovative model called ``hyperfractal'' (to avoid extremes of classical Poisson distribution or uniform distribution tools) and to show that the hyperfractal model can be extended to incorporate road-side infrastructure, as relays. This provides fundamental properties and tools in the framework of stochastic geometry that allow for a rigorous analysis (and are of independent interest for other studies).

Our main contributions are theoretical bounds for the end-to-end communication hop count. We will consider two different energy-minimizing goals: (i) total accumulated energy or (ii) maximum energy per node. 
We will prove that the hop count for an end-to-end transmission is bounded by $O(n^{1-\alpha/(d_F-1)})$ where $\alpha<1$ and $d_F>2$ is the fractal dimension of the mobile nodes process, thus proving that for both constraints the energy decreases as we allow choosing routing paths of higher length. We will also show that the asymptotic limit of the energy becomes significantly small when
 the number of nodes becomes asymptotically large. This is also completed with a lower bound on the network throughput capacity with constraints on path energy. 
 Finally we will show that our model fits real deployments where open data sets are available. The results are confirmed through 
 simulations using different fractal dimensions and path loss coefficients, using a discrete-event simulator in Matlab.

The paper is organized as follows:
\begin{itemize}
\item In Section \ref{model}, we first enhance the hyperfractal model by taking into account mmWave communication range variations as well as the energy costs of transmission.  In addition, we enrich the model by incorporating road-side infrastructure with communication relays (with radio communication range variations). We exploit the self-similarity of intersection locations in urban settings.
\item  in Section \ref{properties}, theoretical properties of the hyperfractal model are obtained to allow the characterization of bounds within the communication model. These properties are developed within a classic stochastic geometric framework and are of interest on their own.
\item In Section \ref{results}, we prove that for an end-to-end transmission in a hyperfractal setup, the energy (either accumulated along the path or bounded for each node) decreases if we allow the path length to increase. In fact, we show that the asymptotic limit of the energy tends to zero when
$n$, the number of nodes, tends to infinity. 
We also prove a lower bound on the network throughput capacity with constraints on path energy.
\item  In Section \ref{calcul_dr}, we further provide a fitting procedure that allows computing the fractal dimension of the relay process using traffic lights data sets.
\item Finally, Section \ref{simulations} validates our analytical results using a discrete-time event-based simulator developed in Matlab.
\end{itemize}

\section{Related Works}





Millimeter-wave is a key technological brick of the 5G NR networks, as foreseen in the ground-breaking work done in \cite{mm5} and already proved by ongoing deployments. The research community has been already investigating challenges that may appear and proposing innovative solutions.
Vehicular communications are one of the areas that are to benefit from the high capacity offered by the mmWave technology.
 In \cite{new_veh_mm1}, the authors propose an information-centric network (ICN)-based mmWave vehicular framework together with a decentralized vehicle association algorithm to realize low-latency content disseminations. The study shows that the proposed algorithm can improve the content dissemination efficiency yet there are no consideration about the energy.  The purpose of \cite{cite_5} is optimizing energy efficiency in a cellular system with relays with D2D (device-to-device) communications using mmWave.

As mmWave is highly directional and blockages raise concerns, the authors of \cite{new_veh_mm3} propose an online learning algorithm addressing the problem of beam selection with environment-awareness in mmWave vehicular systems. The sensitivity to blockages is generally solved with the assistance of the relaying infrastructure. 
The authors of \cite{new_veh_mm2} attempt to solve the dependency of infrastructure for relaying in vehicular communications by exploiting social interactions. In \cite{new_veh_mm4},  the problem of relay selection and power is solved using a centralized hierarchical deep reinforcement learning based method. Yet the authors us a simplified highway scenario, which would not scale for a city structure. 





Stochastic geometry studies have shown results on the interactions between vehicles on the highways or in the street intersections ~\cite{haenggi1, gall2019relayassisted}.
The work in \cite{MH_vehicular} performs statistical studies on traces of taxis to identify a planar point process that matches the random vehicle locations. The authors find that a Log  Gaussian Cox Process provides a good fit for particular traces.
In \cite{new_veh_mm5} propose a novel framework for performance analysis and design of relay selections in mmWave multi-hop V2V communications. More precisely, the distance between adjacent cars is modeled as shifted-exponential distribution.

Self-similarity for urban ad hoc networks has been introduced in \cite{spaswin, gsi}, where the hyperfractal model exploits the fractal features of urban ad hoc networks with road-side infrastructure. 
%
In~\cite{archiv17}, we presented an analysis of the propagation of information in a vehicular network where the cars (the only communication entities) are modeled using the hyperfractal model. As there are no relays in the intersections, as in the current study, in \cite{archiv17} we are in a disconnected network scenario where, as the nodes are allowed to move, the network becomes connected over time with mobility. The packets are being broadcast and results on typical metrics for delay tolerant networks are presented. There is no investigation on  power or energy. 
The study in \cite{spaswin} provides results on the minimal path routing using the hyperfractal model for static nodes to model the road-side infrastructure and assumes an infinite radio range. This is a concern for allowed transmission power and network energy consumption. In contrast to this first study, in this paper we add constraints on these quantities to provide insights on the achievable trade-offs between the end-to-end transmission energy and delay.

\section{System Model}\label{model}

In this section, we recall the necessary definitions of Hyperfractal model and enhance it to be able to formalise urban settings in matter of vehicles (as mobile users) as well as communication relays (as fixed infrastructures), both supported by a deterministic structure (called the support) on which various Poisson processes are sampled. 

\subsection{Hyperfractals for vehicular networks}
Cities are hierarchically organized \cite{Batty2008TheSS}. The main parts of the cities have many elements in common (in functional terms) and repeat themselves across several spatial scaling. This is reminiscent of a fractal, described by Lauwerier~\cite{LauwerierK} as an object that consists of an identical motif repeating itself on an ever-reduced scale.
The hyperfractal model has been introduced and exploited in ~\cite{spaswin,gsi} under static settings and in ~\cite{archiv17} under mobile settings. 



To represent the reality while being able to analyse various features, the map model lays in the unit square embedded in the 2-dimensional Euclidean space where various processes are sampled.
Basically,  we introduce a {\em support of the intensity measure} as a deterministic set (or structure) on which  respective Poisson processes are sampled (the set where this intensity measure is not null).
In this paper, the support of the population of $n$ nodes is a grid of streets. Let us denote this structure by $\mathcal{X}=\bigcup_{l=0}^\infty \mathcal{X}_l$ with 
$$\mathcal{X}_l:=\{(b2^{-(l+1)},y), b=1,3,\ldots, 2^{l+1}-1, y \in [0,1]\} 
\cup \{(x,b2^{-(l+1)}), b=1,3,\ldots,2^{l+1}-1, x \in [0,1]\},
$$
where $l$ denotes the level and $l$ starts from $0$, and $b$ is an odd integer.
Three first levels, $l=0,1,2$, are displayed in Figure~\ref{fig:map_support}. 
Observe that the central "cross" $\mathcal{X}_0$  splits $\bigcup_{l=1}^\infty \mathcal{X}_l$ in $4$ "quadrants" which all are  homothetic  to  $\mathcal{X}$ with the scaling factor $1/2$.

\subsection{Mobile users}
Following~\cite{spaswin},
we consider the Poisson point process $\Phi$ of (mobile) users on $\mathcal{X}$
with total intensity (mean number of points) $n$ ($0<n<\infty$)
having 1-dimensional intensity 
\beq
\lambda_l=n(p/2)(q/2)^l
\label{eq:dens_mobiles}
\eeq
on $\mathcal{X}_l$, $l=0,\ldots,\infty$,
with $q=1-p$ for some parameter $p$ ($0\le p\le 1$).
Note that $\Phi$ 
can be constructed in the following way: one samples the total  number of mobiles users $\Phi(\mathcal{X})=n$ from Poisson distribution; 
each mobile is placed independently 
with probability $p$ on $\mathcal{X}_0$ according to the uniform distribution
and with probability $q/4$ it is recursively located in a similar way in one the four quadrants of $\bigcup_{l=1}^\infty \mathcal{X}_l$.




\begin{figure}\centering
\begin{subfigure}{0.45\textwidth}
\vspace{0.2cm}\centering
\includegraphics[scale=0.22]{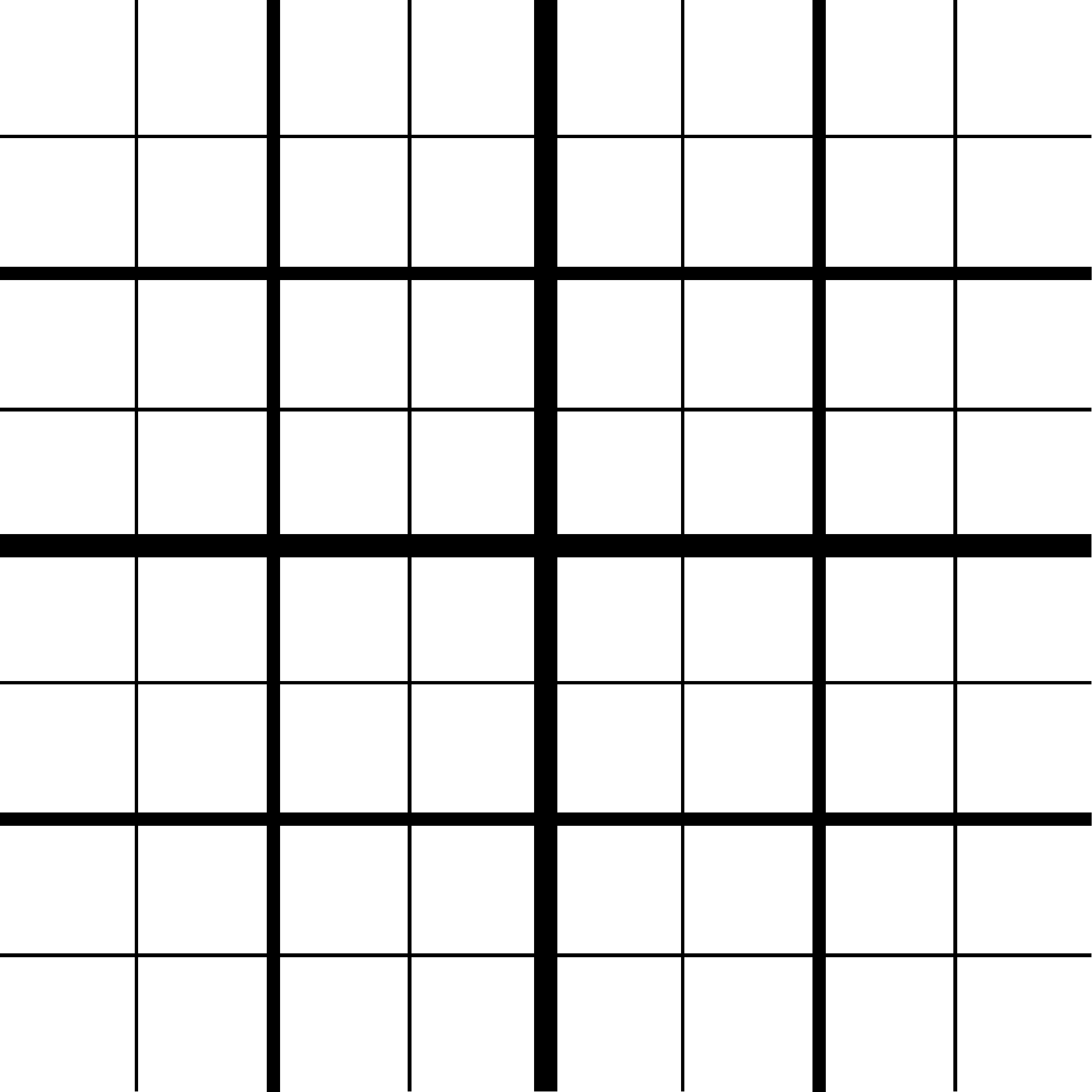}
\caption{}
\label{fig:map_support}
\end{subfigure}
\begin{subfigure}{0.45\textwidth}\centering
\includegraphics[scale=0.25, trim=5cm 1.5cm 0cm 2.5cm]{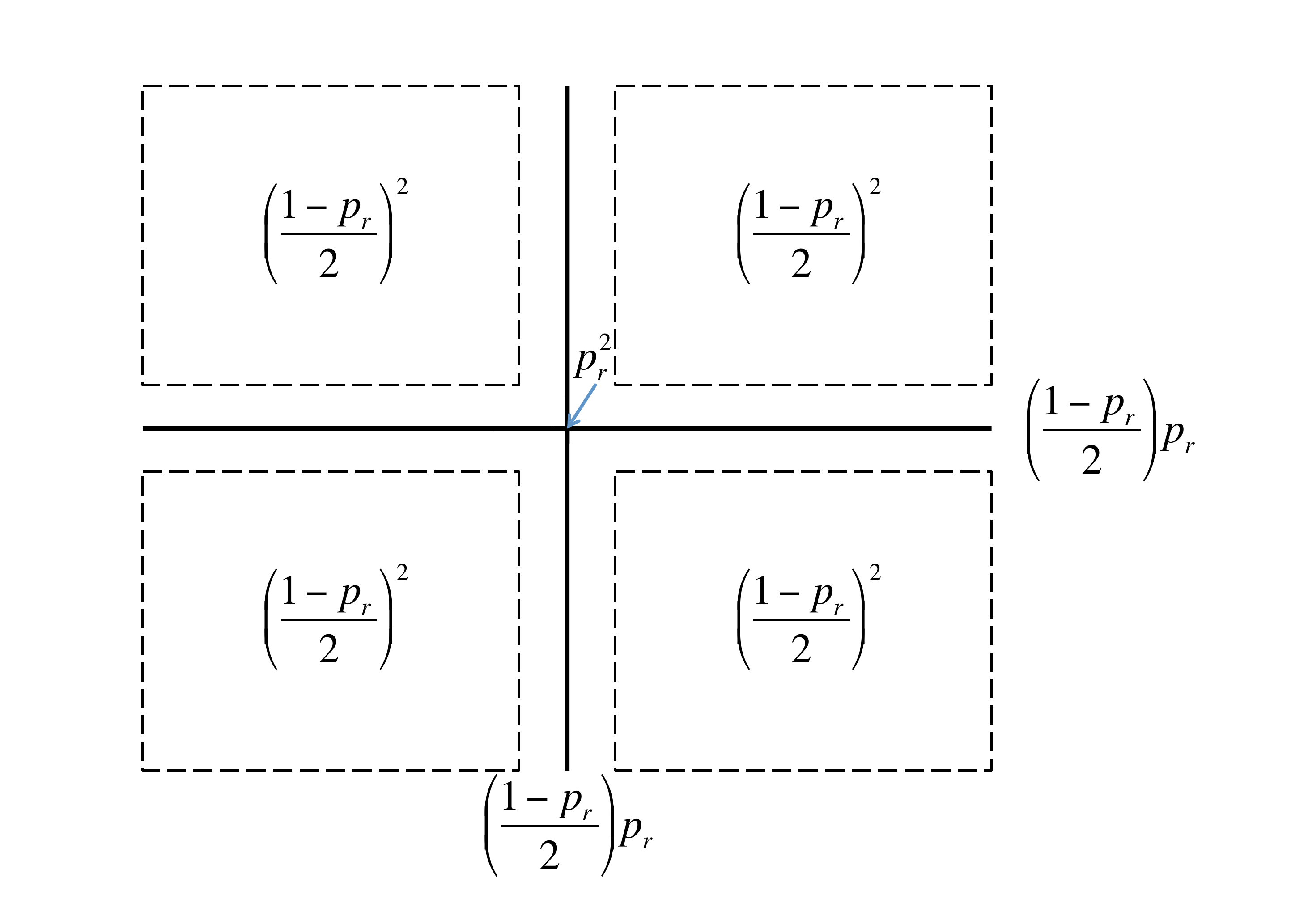}
\caption{}
\hspace{-5.2cm}
\vspace{-1.2cm}
\label{fig:relays}
\end{subfigure}
\caption{(a) Hyperfractal support; (b) Relays process construction}
\end{figure}

The process $\Phi$ is neither stationary nor  isotropic.
However, it has the following self-similarity property:
the intensity  measure of  $\Phi$ on  $\mathcal{X}$ is hypothetically reproduced in each of the four quadrants of $\bigcup_{l=1}^\infty \mathcal{X}_l$ with the scaling of its support by the factor 1/2 and of its value by $q/4$. 

The fractal dimension is a scalar parameter characterizing a geometric object with repetitive patterns. It indicates how the volume of the object decreases when submitted to an homothetic scaling. When the object is a convex subset of an euclidian space of finite dimension, the fractal dimension is equal to this dimension. When the object is a fractal subset of this euclidian space as defined in \cite{mandelbrot}, it is a possibly non integer but positive  scalar strictly smaller than the euclidian dimension. When the object is a measure defined in the euclidian space, as it is the case in this paper, then the fractal dimension can be strictly larger than the euclidian dimension. In this case we say that the measure is {\it hyper-fractal}. 
 \begin{remark}
 The fractal dimension $ d_F$ of the intensity measure of $\Phi$ satisfies
\begin{equation*} \label{eq:d_F}
\left(\frac{1}{2}\right)^{d_F}=\frac{q}{4} \qquad\text{thus}\qquad d_F=\frac{\log(\frac{4}{q})}{\log 2}\ge 2.
\end{equation*}
\end{remark}
The fractal dimension $d_F$ is greater than~2, the Euclidean dimension of the square in which it is embedded, thus 
the model was coined  {\em hyperfractal} in~\cite{spaswin}.
Notice that when $p=1$ the model reduces to the Poisson process on the central cross, while for $p\to 0$, $d_F\to 2$ it corresponds to  the uniform measure in the unit square.




\subsection{Relays}
Not surprisingly, the locations of communication infrastructures in urban settings also display a self-similar behavior: their placement is dependent on the traffic density. Hence we apply another hyperfractal process for selecting the intersections where a road-side relay is installed or the existing traffic light is used as road-side unit. This process has been introduced in~\cite{spaswin}. 

We denote the relay process by $\Xi$. To define $\Xi$ it is convenient to consider an auxiliary  Poisson process $\Phi_r$ with both processes supported by a 1-dimensional subset of $\mathcal{X}$ namely, the set of intersections of segments constituting $\mathcal{X}$. We assume that $\Phi_r$ has discrete intensity   
\begin{equation}\label{eq:intersection}
p(h,v)=\rho{p_r}^2\left(\frac{1-{p_r}}{2}\right)^{h+v}
\end{equation}
at all intersections $\mathcal{X}_h\cap\mathcal{X}_v$
for $h,v=0,\ldots,\infty$ for some parameter ${p_r}$, $0\le {p_r}\le 1$ and $\rho>0$. That is, 
at any such intersection the mass of $\Phi_r$ is Poisson random variable with parameter $p(h,v)$ and $\rho$
is the total expected number of points of $\Phi_r$ in the model.
The self-similar structure of $\Phi_r$ is explained by its construction:
we first sample the total number of points from a Poisson distribution
of intensity $\rho$ and given $\Phi_r(\mathcal{X})=M$,
each point is independently placed with probability ${p_r}^2$ in the central crossings of  $\mathcal{X}_0$, with  probability ${p_r}\frac{1-{p_r}}{2}$  on some other crossing of one  of the four segments forming  $\mathcal{X}_0$ and, with the remaining probability $\left(\frac{1-{p_r}}{2}\right)^2$, in a similar way, recursively, on some crossing of one of the four quadrants of $\bigcup_{l=1}^\infty \mathcal{X}_l$.
This is  illustrated in Figure~\ref{fig:relays}.
The Poisson process $\Phi_r$ is not simple:  
we define the relay process $\Xi$ as the support measure of $\Phi_r$, i.e., 
only one relay is installed at crossings where $\Phi_r$ has at least one point.

\begin{remark}
Note that the relay process $\Xi$ forms a non-homogeneous binomial point process (i.e. points are placed independently) on the crossings of $\mathcal{X}$ with a given intersection of two segments from $\mathcal{X}_h$ and $\mathcal{X}_v$ occupied by a relay point with probability $1-\exp(-\rho p(h,v))$.
\end{remark}
Similarly to the process of users, we can define the fractal dimension of the relay process.
\begin{remark}
The fractal dimension $d_r$ of the probability density of $\Xi$ is equal to the fractal dimension of the intensity measure of the Poisson process $\Phi_r$ and verifies 
\begin{equation*} \label{eq:d_r}
d_r=2\frac{\log(2/(1-{p_r}))}{\log 2}.
\end{equation*}
\end{remark}
A complete hyperfractal map with mobile nodes and relays is illustrated in Figure \ref{fig:hf_map}.

\begin{figure}[httb]\centering
\includegraphics[scale=0.55, trim=3cm 9.5cm 0cm 9.5cm]{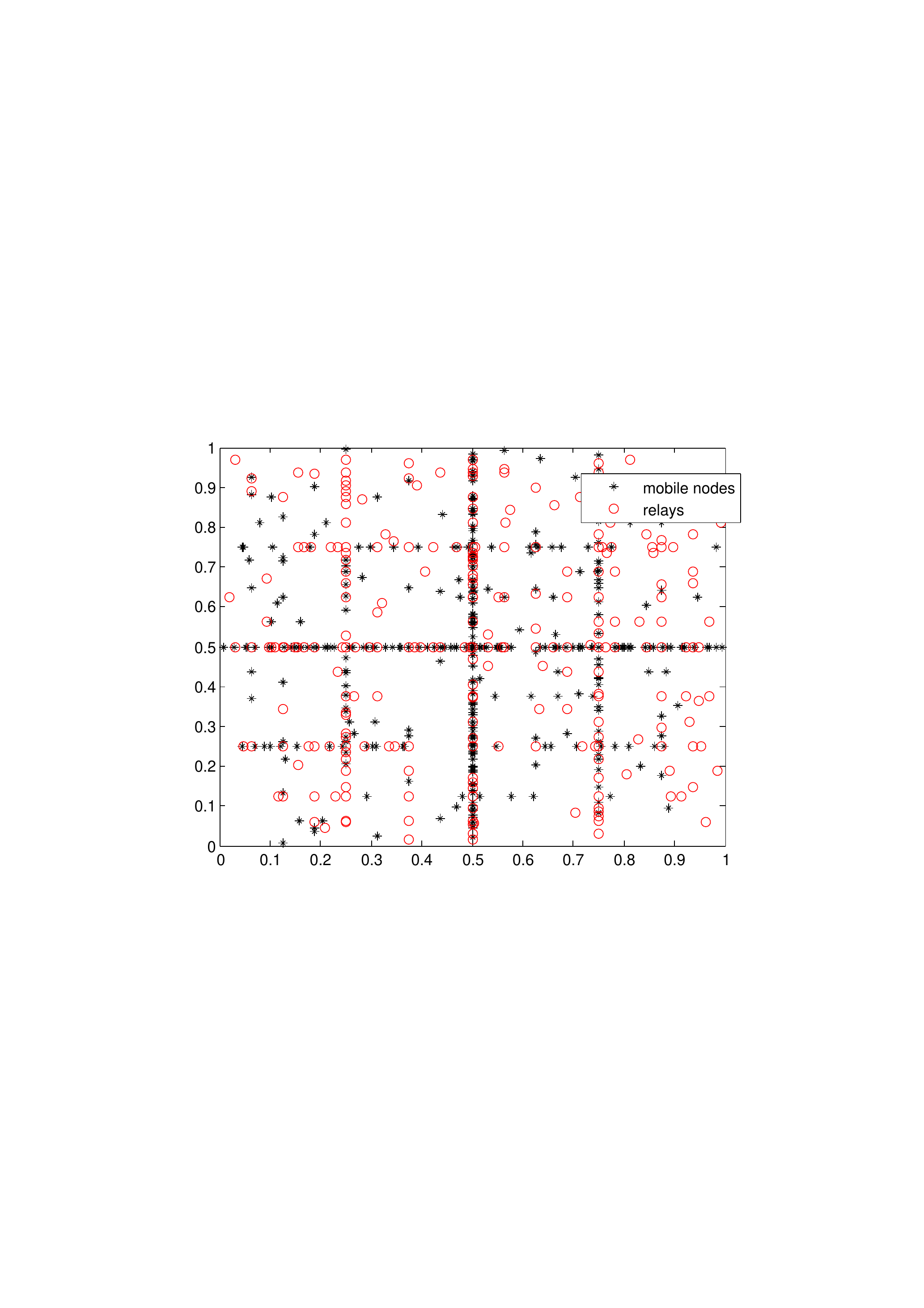}
\caption{Complete hyperfractal map with mobile nodes ("+") and relays ("o")}
\label{fig:hf_map}
\end{figure}

\section{Hyperfractal Properties and Communication model}\label{properties}

In this section we extract the relevant properties of the Hyperfractal model and relate them to our communication model. We also provide some additional insights into these models via the framework of the stochastic geometry and point process. These latter results are of independent interest and allow to lay foundations for other works.

\subsection{Number of relay nodes and asymptotic estimate}

We shall now provide the proof for the average number of relays in the map, this time, exploiting the "fractal-like" properties of our model and providing useful asymptotic estimates. 


\begin{theorem}\label{theo:relays}
The average total number of relays $R(\rho)$ in the map is:
 \begin{equation} 
 R(\rho)=\sum_{H,V}2^{H+V}\left(1-\exp(-\rho p(H,V))\right)=O(\rho^{2/d_r}\log\rho)
 \end{equation}
\end{theorem}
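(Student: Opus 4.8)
\emph{Proof plan.} The idea is to collapse the double sum over $(H,V)$ to a single sum over the ``rank'' $k:=H+V$ and then to split that sum at the level where the occupation probabilities $1-\exp(-\rho p(H,V))$ pass from being essentially $1$ to being essentially linear in $\rho p(H,V)$. First I would note that both the weight $2^{H+V}$ and the parameter $p(H,V)$ depend on $(H,V)$ only through $k=H+V$, and that there are exactly $k+1$ ordered pairs of nonnegative integers with $H+V=k$; hence
\[
R(\rho)=\sum_{k=0}^{\infty}(k+1)\,2^{k}\Bigl(1-e^{-\rho\, p_r^{2}\theta^{k}}\Bigr),\qquad \theta:=\tfrac{1-p_r}{2}\in(0,\tfrac12),
\]
assuming $0<p_r<1$ (the boundary cases $p_r\in\{0,1\}$ are degenerate and trivially of the stated order). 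Two facts will be used repeatedly: $2\theta=1-p_r<1$, so that the weights $(2\theta)^{k}$ are summable, and, from the formula for $d_r$ in the preceding remark, $\theta=2^{-d_r/2}$, equivalently $\log_2(1/\theta)=d_r/2$.

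Next I would locate the crossover. Set $k_\rho:=\bigl\lfloor\log_{1/\theta}\rho\bigr\rfloor$, so that $\rho\,\theta^{k_\rho}$ stays between the constants $1$ and $1/\theta$, and consequently
\[
2^{k_\rho}=\Theta\!\bigl(\rho^{\,1/\log_2(1/\theta)}\bigr)=\Theta\!\bigl(\rho^{2/d_r}\bigr),\qquad k_\rho=\Theta(\log\rho).
\]
For $k\le k_\rho$ I would simply bound $1-e^{-x}\le1$ and invoke the elementary identity $\sum_{k=0}^{K}(k+1)2^{k}=1+K\,2^{K+1}$, which gives a contribution of order $k_\rho 2^{k_\rho}=O(\rho^{2/d_r}\log\rho)$; here one should observe that this geometric-type sum (common ratio $2>1$) is dominated, up to a constant, by its top term times the linear factor, which is exactly where the single $\log\rho$ comes from.

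For $k>k_\rho$ I would instead bound $1-e^{-x}\le x$, so that this part of the sum is at most $\rho\,p_r^{2}\sum_{k>k_\rho}(k+1)(2\theta)^{k}$. Since $2\theta<1$, the tail of this convergent series is $O\!\bigl(k_\rho(2\theta)^{k_\rho}\bigr)$, and
\[
\rho\,(2\theta)^{k_\rho}=2^{k_\rho}\cdot\bigl(\rho\,\theta^{k_\rho}\bigr)=\Theta\!\bigl(2^{k_\rho}\bigr)=\Theta\!\bigl(\rho^{2/d_r}\bigr),
\]
so this part is again $O(\rho^{2/d_r}\log\rho)$. Adding the two bounds yields $R(\rho)=O(\rho^{2/d_r}\log\rho)$. (A matching lower bound of the same order, if wanted, follows by keeping only the terms with $k=k_\rho$.)

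The only genuinely delicate point is the crossover computation in the second paragraph: one must correctly see that the occupation probabilities saturate precisely up to rank $k_\rho=\Theta(\log\rho)$ and then translate $2^{k_\rho}$ into the power $\rho^{2/d_r}$ through the identity $\theta=2^{-d_r/2}$. Once that is pinned down, the rest is a pair of routine geometric estimates — the small-$k$ sum controlled by its largest term, the large-$k$ sum by its smallest — and the extra factor $\log\rho$ is nothing more than the value of the weight $k+1$ at the crossover level.
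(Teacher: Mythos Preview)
Your argument is correct. The collapse of the double sum to the single sum over $k=H+V$ is exactly what the paper also does (its equation~\eqref{eq:relayk}), so on that step the two proofs coincide.

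Where the routes diverge is in extracting the asymptotic from that single sum. The paper does not split at a crossover; instead it observes that $L_0(\rho):=\sum_{V\ge0}2^V(1-e^{-\rho p_r^2\theta^V})$ and $R(\rho)$ satisfy the divide-and-conquer recurrences
\[
L_0(\rho)=1-e^{-p_r^2\rho}+2L_0(\theta\rho),\qquad R(\rho)=L_0(\rho)+2R(\theta\rho),
\]
and then invokes the Mellin-transform/harmonic-sum machinery of \cite{mellin,depois} to read off $L_0(\rho)=O(\rho^{2/d_r})$ and, from the second recurrence, $R(\rho)=O(\rho^{2/d_r}\log\rho)$. Your approach replaces that machinery by the elementary bound $1-e^{-x}\le\min(1,x)$ and a direct split at $k_\rho=\lfloor\log_{1/\theta}\rho\rfloor$, which is entirely self-contained. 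What the paper's route buys is that the Mellin framework, if pushed further, also yields the exact leading constant and the small periodic fluctuations hidden inside the $O$; what your route buys is a proof that a reader can verify without consulting \cite{mellin,depois}.
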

\begin{proof}
The probability that a crossing of two lines of level $H$ and $V$ is selected to host a relay is exactly $1-\exp(-\rho p(H,V))$. 

The average number of relays on a street of level $H$ is $L_H(\rho)$ and satisfies:
\begin{equation*}
L_H(\rho)=\sum_{V\ge 0} 2^V\left(1-\exp(-p(H,V)\rho)\right).
\end{equation*}
We notice that $L_H(\rho)=L_0((\frac{1-{p_r}}{2})^H\rho)$ and that $L_0(\rho)$ satisfies the functional equation:
\begin{equation*}
L_0(\rho)=1-\exp(-{p_r}^2\rho)+2L_0\left(\rho\frac{1-{p_r}}{2}\right).
\end{equation*}
It is known from \cite{mellin} and \cite{depois} that this classic equation has a solution such as $L_0(\rho)=O(\rho^{2/d_r})$. 

The average total number of relays $R(\rho)$ in the city is obtained by summing the average number of relays over all streets parallel to a given direction, {\it e.g.} the West-East direction (summing on all streets would count twice each relay). Since there are $2^H$ West-East streets at level $H$:
\begin{equation}
R(\rho)=\sum_H 2^HL_H(\rho)=\sum_{H,V\ge 0} 2^{H+V}\left(1-\exp(-p(H,V)\rho)\right)=\sum_{k=0}^\infty (k+1) 2^k(1-\exp(-\rho p_r^2((1-{p_r})/2)^{k})
\label{eq:relayk}\end{equation}
and satisfies the functional equation
\begin{equation*}
R(\rho)=L_0(\rho)+2R(\rho\frac{1-{p_r}}{2}).
\end{equation*}
From the same references (\cite{mellin,depois}), one gets
 \begin{equation*} \label{eq:nr_relays}
 R(\rho)=O(\rho^{2/d_r}\log\rho)
 \end{equation*}
Since $2/d_r<1$, the number of relays is much smaller than $\rho$ when $\rho\to\infty$. 
\end{proof}
Let us verify numerically the claim of Theorem~\ref{theo:relays}. 
 We generate the hyperfractal maps with several values of $n$, $\rho_n=n$ and $d_F=3$. Let us remind the reader that the theorem gives the expression~(\ref{eq:relayk}) of $R(\rho)$ as a sum with $k \to \infty$. In reality, a limited number of terms of the sums suffice to approximate $R(\rho)$ with acceptable accuracy. We denote by $k_{max}$ the number of terms used to compute in the sum.
Figure \ref{fig:valid1} shows that the computed values of the number of relays approach the measured value for $k_{max}=40$. The precision is further enhanced when $k_{max}=60$.

\begin{figure}[httb]\centering
\includegraphics[scale=0.42, trim=0cm 6.5cm 0cm 7.5cm]{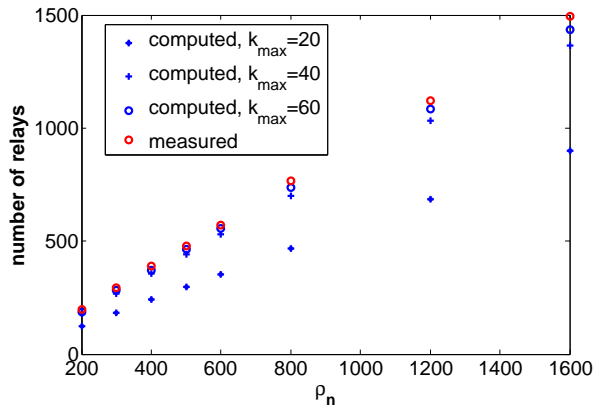}
\caption{Measured versus computed number of relays in the map for increasing values of $k_{max}$}
\label{fig:valid1}
\end{figure}

\subsection{Communication model}


As we primarily seek to understand the relationship between end-to-end communications and energy costs, we do not consider detailed aspects of the communication protocol that impact these (e.g., the distributed aspects needed to gather position information and construct routing tables in every node). 
The transmission is done in a half-duplex way, a node is not allowed to transmit and receive during the same time-slot. The received signal is affected by additive white Gaussian noise (AWGN) noise $N$ and path-loss with pathloss exponent $\delta\geq2$.


As a consequence of the high directivity and low permeability of the waves in high frequency (6GHz, 28GHz, 73 GHz as candidates for 5G NR), the next hop is always the next neighbor on a street, i.e. there exists no other node between the transmitter and the receiver. Indeed, while a lot of work is still dedicated to characterising the exact overall network connectivity for mmWave communications V2V in urban setting~\cite{Ozpolat2019}, it is known that intermediate vehicles create significant blockage and a severe attenuation of the received power for vehicles past near neighbours~\cite{NN_mmWave,Wimob19}. 
Thus the  routing strategy considered is a nearest neighbor routing.  In fact, we can show that, under reasonable assumptions, this strategy is optimal.


\begin{lemma}
If the noise conditions are the same around each node, then the nearest neighbor routing strategy is optimal in terms of energy.
\end{lemma}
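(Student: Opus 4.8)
The plan is to reduce the claim to the superadditivity of $r\mapsto r^{\delta}$. First I would make the energy model explicit: with path loss $r^{-\delta}$, AWGN of power $N$ and a fixed decoding threshold $\theta$, a reliable hop over distance $r$ requires transmit power at least $\theta N r^{\delta}$, hence costs energy $E(r)=c\,r^{\delta}$ over a slot, where $c$ depends only on $N$, $\theta$ and the slot length. The hypothesis that the noise conditions are the same around every node is exactly what guarantees that $c$ is \emph{common to all transmitters}; thus the energy of any route $s=v_0,v_1,\dots,v_k=d$ equals $c\sum_{i=0}^{k-1}|v_i-v_{i+1}|^{\delta}$, and the constant $c$ drops out of every comparison.

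Second, I would record the elementary fact that for $\delta\ge 1$ (here $\delta\ge 2$) the map $r\mapsto r^{\delta}$ is superadditive: if a straight segment of length $L$ is cut into consecutive pieces $\ell_1,\dots,\ell_m$ with $\sum_j\ell_j=L$, then $\sum_j\ell_j^{\delta}=\sum_j\ell_j\,\ell_j^{\delta-1}\le L^{\delta-1}\sum_j\ell_j=L^{\delta}$. Consequently, along any stretch of a street, replacing one long hop between two cars by the chain of hops through all cars situated between them never increases the energy spent on that stretch: those intermediate cars are colinear with the two endpoints (they sit on the same straight street segment), so their successive distances sum to the length of the long hop and the inequality applies hop by hop.

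Third, I would run this refinement over an arbitrary feasible route from $s$ to $d$: for each consecutive pair $(v_i,v_{i+1})$, insert every car lying on the portion of street joining them; by the previous step the total energy does not increase, and once this is done for all $i$ the route uses, on each traversed segment, precisely the successive nearest neighbours --- i.e. it has become a nearest-neighbour route. Hence every feasible route can be turned into a nearest-neighbour route of no larger energy, so restricting the routing to nearest-neighbour hops cannot raise the minimal end-to-end energy, which is the assertion.

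The main obstacle is not the inequality but setting up the comparison cleanly. One must (i) state the power/energy law and invoke the ``same noise'' hypothesis explicitly to make the per-unit cost $c$ uniform across nodes --- without this, hops near noisier nodes could be charged differently and the refinement argument would break; and (ii) handle a would-be hop that straddles an intersection, where a skipped car need not be Euclidean-between its endpoints: here one restricts, as the blockage model already does, to hops within a single street segment, or first splits such a hop at the intersection and then applies superadditivity on each straight piece. Finally, the proof should note that only the propagation term $c\,r^{\delta}$ is counted: were a fixed per-transmission overhead added, extra hops would no longer be free and the statement would need amending --- but this is consistent with the AWGN/path-loss model adopted here.
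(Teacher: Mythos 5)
Your proposal is correct and follows essentially the same route as the paper: the heart of both arguments is the superadditivity inequality $|x-z|^{\delta}+|z-y|^{\delta}\le\left(|x-z|+|z-y|\right)^{\delta}$ for same-street, colinear nodes, with the uniform-noise hypothesis making the common factor $NK$ (your $c$) cancel. You merely spell out more fully what the paper leaves implicit --- iterating the single-insertion step over an arbitrary route and confining hops to a single street segment --- so no substantive difference in approach.
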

\begin{proof}
To simplify this proof we ignore the signal attenuation due to the mobile users positioned as radio obstacles between the sender and the receiver of one hop packet transmission, although this will have an important impact on energy. Consider the packet transmission from a node at a location $x$ to a node at location $y$ on the same street. If $N$ is the noise level and $K$ is the required SNR, then the transmitter must use a signal of power $|y-x|^\delta NK$. Assume that there is a node at position $z$ between $x$ and $y$. Transmitting from $x$ to $z$ and then from $z$ to $y$ would require a cumulated energy $(|z-x|^\delta+|y-z|^\delta)NK$ which is smaller than the required energy for the direct transmission, since $|x-z|^\delta+|z-y
|^\delta\le \left(|x-z|+|z-y
|\right)^\delta$. 
\end{proof}



Let us make the simplifying assumption that all nodes on a street transmit with the same nominal power $P_m$ which depends only on the number $m$ of nodes on the street. We argue that a good approximation is to suppose that: 
\beq
P_m=\frac{\Pm}{m^\delta}
\label{eq:Pm}\eeq
where 
$\Pm$ is the transmitting power necessary for a node at one end of the street to transmit a packet directly to a node at the other end of the street. 
In other words, assume a road of infinite length where the nodes are regularly spaced by intervals of length $L$ is the length of our street. If in this configuration every node has a nominal power of $\Pm$, then the nominal power to achieve the same performance with a density $m$ times larger but with the same noise values should be $\Pm m^{-\delta}$ in order to cope with the loss effect. Thus would give expression~(\ref{eq:Pm}) if the nodes were regularly spaced by intervals of length $L/m$. But since the spacing intervals are irregular, one should cope with the largest gap $L_m/m$, this brings a small complication in the evaluation of $P_m$. But the probability that there exists a spacing larger than a given value $x/m$ is smaller than $m(1-\frac{x}{mL})^m\le m^{x/L}$. Thus we have $L_m=O(\log m/m)$ (asymptotically almost surely, and in fact as soon as $\lim\inf_m L_m/\log m>1$), and consequently $P_m=O(\Pm\log^\delta m/m^\delta)$. To help the reader, we focus on the expression~(\ref{eq:Pm}) as we are mainly interested in the order of magnitude.  

\begin{definition}
The \textit{end-to-end transmission delay} is represented by the total number of hops the packet takes in its path towards the destination. 
\end{definition}

As the energy to transmit a packet is the transmission power per unit of time, we consider the time necessary to send a packet as being equal to the length of a time-slot. We thus do not consider any MAC protocol for re-transmission and acknowledgment of the reception (e.g., we do not consider CSMA-like protocols). In any case, as it will be later observed throughout our derivations, varying the MAC protocol would just change some constants but not the overall scaling. Therefore, from now on, we will refer to $\Pm$ as the nominal power. 
Following this reasoning, the accumulated energy to cover a whole street containing $m$ nodes with uniform distribution via nearest neighbor routing is $mP_m=\frac{\Pm}{m^{\delta-1}}$. In this case, the larger the population of the street the smaller the nominal power and the smaller the energy to cover the street.

Relays stand in intersections, and thus on two streets with different values of $m$. 
We consider a relay to use two different radio interfaces, each with a transmission power according to the previously mentioned rule for each of the streets. This is a perfectly valid assumption, in line with 5G devices specifications for dual connectivity~\cite{3GPP}.
\subsection{Fundamental properties of the Poisson processes $\Phi$, $\Phi_r$, and $\Xi$}

In the following, we shall provide some fundamental tools that allow one to handle our model in a typical stochastic geometric framework. This section gives  insights about the theoretical foundations of hyperfractal point process which is of independent interest to our main results and can be used in other works. 

Let $L+1$ be a geometric random variable with parameter $p$ (i.e., $\mathbb{P}(L=l)=p(1-p)^{l}$, $l=0,1,\ldots$) and given $L$, let 
$x_0$ be the random  location uniformly chosen on $\mathcal{X}_L$. We call $x_0$ the {\em typical mobile user} of $\Phi$. More precisely, 
we shall consider the point process $\Phi\cup\{x_0\}$ where $x_0$ is sampled as described above and independently of $\Phi$.

Similarly, let $U+1$ and $W+1$ be two independent geometric random variables with parameter ${p_r}$ and given $(U,W)$, let 
$x_*$ be a crossing uniformly sampled from all the intersections of $\mathcal{X}_U\cap\mathcal{X}_W$.
 We call $x_*$ the {\em typical auxiliary point} of $\Phi_r$. More precisely, 
we shall consider point process $\Phi_r\cup\{x_*\}$ where $x_*$ is sampled as described above and independently of $\Phi_r$.

Finding the definition of the {\em typical relay node} $\xi_0$ is less explicit yet similar to the typical point definition.
Informally, the  conditional distribution of points ``seen'' from the origin given that the process has a point there is exactly the same as the conditional distribution of points of the process ``seen'' from an arbitrary location given the process has a point at that location. 

We define it as the random location on the set of the crossings of $\mathcal{X}$
involving the following biasing of the distribution of
$x_*$ by the inverse of the total number of the auxiliary points co-located with $x_*$
$$\mathbb{P}(\xi_0=x)=
\E\left[\frac{1(x_*=x)}{1+\Phi_r(\{x_*\})}\right]/\E\left[\frac{1}{1+\Phi_r(\{x_*\})}\right]\,.$$
More precisely we consider $\Xi'\cup\{\xi_0\}$ (which distribution is given
for any intersection $x$ of segments in $\mathcal{X}$ and a possible configuration~$\phi$ of relays) by considering
$$\mathbb{P}(\,\xi_0=x,\Xi'=\phi\,)=
\E\left[\frac{1(x_*=x)1(\text{supp}(\Phi_r)\setminus\{x\}=\phi)}{1+\Phi_r(\{x_*\})}\right]/
\E\left[\frac{1}{1+\Phi_r(\{x_*\})}\right]\,$$
where $\Phi_r$ and $x_*$ are independent (as defined above).
Note that,  in contrast to the typical points of  Poisson processes $\Phi$ and $\Phi_r$, the typical relay $\xi_0$ is not independent of remaining relays $\Xi'$.

In what follows, we shall prove that our typical points support 
the Campbell-Mecke formula (see~\cite{bartek_book, bartek_new_book}) 
thus justifying our definition and also providing an important tool for future exploiting the model in a typical stochastic geometric framework.

\begin{theorem}[Campbell-Mecke formula]\label{th.Campbell}
For all measurable functions $f(x,\phi)$ where $x\in\mathcal{X}$ and $\phi$ is a realization of a point process on $\mathcal{X}$
\begin{align}
\E\left[\sum_{x_i \in \Phi} f(x_i,\Phi)\right]&=n\E\left[f(x_0,\Phi\cup\{x_0\})\right]\label{e.CM-Phi}\,\\
\E\left[\sum_{x_i \in \Phi^r} f(x_i,\Phi_r)\right]&=\rho\E\left[f(x_*,\Phi_r\cup\{x_*\})\right]\label{e.CM-Phir}\\
\noalign{and}
\E\left[\sum_{x_i \in \Xi} f(x_i,\Xi)\right]&=\E\left[\Xi(\mathcal{X})\right]\E\left[f(\xi_0,\Xi'\cup\{\xi_0\})\right]\label{e.CM-Xi}\,
\end{align}
where the total expected number of relay nodes $\E\left[\Xi(\mathcal{X})\right]$ admits 
the following representation given by Theorem~\ref{theo:relays}
\begin{equation}\label{e:Xi-intenisty}
\E[\Xi(\mathcal{X})]=R(\rho)=\sum_{k=0}^\infty (k+1) 2^k(1-\exp(-\rho p_r^2((1-{p_r})/2)^{k}).
\end{equation}
\end{theorem}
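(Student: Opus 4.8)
The plan is to establish the three Campbell–Mecke identities separately, since the three point processes have different structures. For $\Phi$ (equation~\eqref{e.CM-Phi}), I would use the explicit construction of $\Phi$ given in the text: condition on the total count $\Phi(\mathcal{X})=N\sim\mathrm{Poi}(n)$, and note that given $N$, the points are i.i.d.\ with the common spatial law obtained by the recursive level-selection procedure (level $L$ geometric with parameter $p$, then uniform on $\mathcal{X}_L$). That common law is exactly the law of the typical user $x_0$. The identity $\E[\sum_{x_i\in\Phi} f(x_i,\Phi)]=n\,\E[f(x_0,\Phi\cup\{x_0\})]$ is then the classical Mecke/Slivnyak formula for a Poisson process: removing the summing point leaves an independent copy of the process, so I would either cite~\cite{bartek_book, bartek_new_book} directly or give the one-line computation conditioning on $N$ and using $\E[N g(N)]=n\,\E[g(N-1+1)]\cdot(\text{reindex})$, i.e.\ $\E[N g(N)] = n\E[g(N+1)]$ for Poisson. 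The same argument verbatim handles $\Phi_r$ (equation~\eqref{e.CM-Phir}), with $n$ replaced by $\rho$ and the level pair $(U,W)$ replacing $L$, since $\Phi_r$ is also a Poisson process with the stated discrete intensity $p(h,v)$ and $x_*$ is its typical point.

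The real content is~\eqref{e.CM-Xi}, because $\Xi$ is \emph{not} Poisson: it is the support (i.e.\ the $0/1$-thinning) of $\Phi_r$, so a crossing is occupied with probability $1-\exp(-\rho p(h,v))$ and occupancies at distinct crossings are independent (a non-homogeneous Bernoulli/binomial point process, as noted in the Remark). For such a process the Mecke-type formula reads $\E[\sum_{x_i\in\Xi}f(x_i,\Xi)] = \sum_x \prob{x\in\Xi}\,\E[f(x,\Xi\mid x\in\Xi)]$, where under the conditional law the remaining points $\Xi'=\Xi\setminus\{x\}$ have their original (independent) law — this is just the independence of coordinates. So the left side equals $\sum_x \prob{x\in\Xi}\,\E[f(x,\Xi'\cup\{x\})]$ with $\Xi'$ independent of the choice of $x$. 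It then remains to show that this equals $\E[\Xi(\mathcal{X})]\,\E[f(\xi_0,\Xi'\cup\{\xi_0\})]$, i.e.\ that the size-biased-by-occupancy point $\xi_0$ defined in the text has law $\prob{\xi_0=x}=\prob{x\in\Xi}/\E[\Xi(\mathcal{X})] = \bigl(1-\exp(-\rho p(h,v))\bigr)/R(\rho)$. This is exactly where the somewhat opaque definition $\prob{\xi_0=x}=\E[\mathbf 1(x_*=x)/(1+\Phi_r(\{x_*\}))]/\E[1/(1+\Phi_r(\{x_*\}))]$ must be reconciled: I would compute the numerator by conditioning on the level pair $(U,W)$ of $x_*$, writing $x_*$ as uniform over the $2^{U}\cdot 2^{W}$ crossings of that type and using that, for a crossing at level $(h,v)$, $\Phi_r(\{x_*\})$ is Poisson$(p(h,v))$ so $\E[1/(1+\mathrm{Poi}(\mu))]=(1-e^{-\mu})/\mu$. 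Plugging in the geometric weights of $(U,W)$ (parameter $p_r$), the factors of $p(h,v)$ in the numerator cancel against the Poisson-mean in $\E[1/(1+\mathrm{Poi})]$ and against the $(1-p_r)^{h+v}$ geometric weights and the $2^{h+v}$ enumeration of crossings, leaving the numerator proportional to $\sum_{h,v}2^{h+v}(1-e^{-\rho p(h,v)})\cdot(\text{per-crossing term})$, which after normalisation gives precisely $\prob{x\in\Xi}$ up to the constant $R(\rho)$; the denominator supplies exactly the normalising constant $1/\E[\Xi(\mathcal X)]=1/R(\rho)$. The joint statement with $\Xi'=\phi$ follows the same way once one observes $\mathrm{supp}(\Phi_r)\setminus\{x\}$, conditioned on a point at $x$, is independent of everything at $x$.

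The main obstacle is bookkeeping in the last paragraph: getting the weights right so that the Palm bias defined via $1/(1+\Phi_r(\{x_*\}))$ collapses to the plain occupancy-size-biasing $\prob{x\in\Xi}/R(\rho)$. Concretely one must check that $\E\bigl[\mathbf 1(x_*=x)/(1+\Phi_r(\{x_*\}))\bigr]$, for a fixed crossing $x$ at level $(h,v)$, equals (a constant independent of $x$) times $1-e^{-\rho p(h,v)}$; the identity making this work is $p_r\,\mathrm{Po}\text{-ratio} = \E[\mathbf 1(x_*=x)]\cdot\E[1/(1+\Phi_r(\{x\}))]=\bigl(\text{const}\bigr)(1-p_r)^{h+v}/2^{\,?}\cdot(1-e^{-\rho p(h,v)})/(\rho p(h,v))$ and $p(h,v)=\rho p_r^2((1-p_r)/2)^{h+v}$ divided back in cancels the $(1-p_r)^{h+v}$, leaving $1-e^{-\rho p(h,v)}$ as desired. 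I would present this cleanly by first stating the two elementary facts — (i) for Poisson$(\mu)$, $\E[1/(1+\mathrm{Poi}(\mu))]=(1-e^{-\mu})/\mu$, and (ii) for a Bernoulli point process, conditioning on a point at $x$ leaves the rest unchanged and independent — and then doing the weight cancellation once, rather than repeating it. With those two facts in hand, \eqref{e.CM-Xi} with the representation \eqref{e:Xi-intenisty} (which is Theorem~\ref{theo:relays}) falls out immediately, and \eqref{e.CM-Phi}–\eqref{e.CM-Phir} are the standard Poisson/Mecke formula applied to the explicit i.i.d.-given-the-count constructions already spelled out in Section~\ref{model}.
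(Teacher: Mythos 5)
Your argument is correct, but for the substantive part, \eqref{e.CM-Xi}, it takes a genuinely different route from the paper. The paper never touches the Bernoulli structure of $\Xi$ nor computes the law of $\xi_0$: it rewrites the sum over relays as a weighted sum over the Poisson points, $\E[\sum_{x_i\in\Xi}f(x_i,\Xi)]=\E\bigl[\sum_{x_i\in\Phi_r}f(x_i,\mathrm{supp}(\Phi_r))/\Phi_r(\{x_i\})\bigr]$, applies the already-established identity \eqref{e.CM-Phir} to this weighted functional to get $\rho\,\E\bigl[f(x_*,\mathrm{supp}(\Phi_r\cup\{x_*\}))/(1+\Phi_r(\{x_*\}))\bigr]$, and then observes that the joint definition of $(\xi_0,\Xi')$ was built exactly as this biased law, so \eqref{e.CM-Xi} and $\E[\Xi(\mathcal{X})]=\rho\,\E[1/(1+\Phi_r(\{x_*\}))]$ (take $f\equiv1$) drop out with no computation. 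You instead exploit the independent-thinning (binomial) structure of $\Xi$ noted in the Remark, write the elementary site-by-site Mecke formula $\E[\sum_{x_i\in\Xi}f(x_i,\Xi)]=\sum_x\prob{x\in\Xi}\,\E[f(x,\Xi'\cup\{x\})]$, and then reconcile it with the paper's definition of $\xi_0$ by explicitly evaluating the bias via $\E[1/(1+\mathrm{Poi}(\mu))]=(1-e^{-\mu})/\mu$ and the cancellation of the geometric weights against $p(h,v)$, which identifies $\prob{\xi_0=x}$ as the occupancy-biased law $(1-\exp(-p(h,v)))$ normalised by $R(\rho)$; the factorisation argument you sketch for the joint law of $(\xi_0,\Xi')$ (independence of $\Phi_r(\{x\})$ from $\Phi_r$ off $x$) is exactly what is needed and is sound. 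Your route is longer on bookkeeping but buys an explicit description of the typical relay's marginal law and of $\Xi'$ given $\xi_0$, which the paper never makes explicit; the paper's route is more economical because the biased definition of $(\xi_0,\Xi')$ was engineered precisely so that \eqref{e.CM-Phir} does all the work. For \eqref{e.CM-Phi} and \eqref{e.CM-Phir} both proofs are the standard Poisson Mecke--Slivnyak argument (the paper integrates against the level-decomposed intensity measure, you condition on the Poisson total and use the i.i.d.\ placement); this difference is cosmetic. One small caution: your per-crossing weight computation should fix a single convention for whether $\rho$ is already absorbed in $p(h,v)$ (the paper itself wavers between $1-\exp(-p(h,v))$ and $1-\exp(-\rho\,p(h,v))$), otherwise the cancellation you rely on picks up a spurious factor of $\rho$.
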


\begin{proof}[Proof of Theorem~\ref{th.Campbell}]
First, consider the process of users $\Phi$.
The Campbell-Mecke formula and the Slivnyak theorem \cite{bartek_lecture_notes} for the  non-stationary Poisson point processes $\Phi$ give
\begin{equation}\label{eq:ns_c}
\E\left[\sum_{x_i \in \Phi} f(x_i,\Phi)\right]=\int_\mathcal{X} \E\left[f(x,\Phi\cup\{x\})\right]\mu(dx),
\end{equation}
where $\mu(dx)$ is the intensity measure of the process $\Phi$. 
Specifying this intensity measures the right-hand side term of~\eqref{eq:ns_c}, thus this becomes
\begin{equation*}
\sum_{l=0}^\infty \int_{\mathcal{X}_l} \E\left[f(x,\Phi \cup \{x\})\right] n(1-p)^lp dx.
\end{equation*}
In the above expression, one can recognize $\E\left[f(x_0,\Phi\cup\{x_0\})\right]$ which concludes the proof of~\eqref {e.CM-Phi}.
The proof of~\eqref{e.CM-Phir}
 follows the same lines.
 Consider now the relay process $\Xi$.
By the definition of~$\Xi$, one can express the left-hand side of~\eqref{e.CM-Xi} in the following way:
\begin{equation*}
\E\left[\sum_{x_i \in\Xi} f(x_i,\Xi)\right]=\E\left[\sum_{x_i \in\Phi_r}\frac{f(x_i,\text{supp}(\Phi_r))}{\Phi_r(\{x_i\})}\right]\,,
\end{equation*}
where $\text{supp}(\Phi_r)$ denotes the support of $\Phi_r$.
Using~\eqref{e.CM-Phir}, we thus obtain:
\begin{equation}\label{eq:Xi-Campblell2}
\E\left[\sum_{x_i \in \Xi} f(x_i,\Xi)\right]=\rho\E\left[\frac{f(x_*,\text{supp}(\Phi_r\cup\{x_*\}))}{1+\Phi_r(\{x_*\})}\right].
\end{equation}
By the definition of the joint distribution of  $x_*$ and $\text{supp}(\Phi_r\cup\{x_*\})$ 
the right-hand side of~\eqref{eq:Xi-Campblell2} is equal to 
$$
\rho\E\left[\frac{1}{1+\Phi_r(\{x_*\})}\right]\E\left[f(\xi_0,\Xi'\cup\{\xi_0\})\right]\,.$$
This completes the proof of~\eqref{e.CM-Phir} with 
$$\E\left[\Xi(\mathcal{X})\right]=\rho\E\left[\frac{1}{1+\Phi_r(\{x_*\})}\right].$$
\end{proof}

\section{Main Results}\label{results}
We now provide our theoretical bounds for the end-to-end communication hop count. The number of mobile nodes is exactly $n$, where  $n$ is an integer which  runs to infinity.
\subsection{Energy vs Delay}
Given that the transmitting power is dependent on the average density of the nodes on the streets and that the transmission power per node is limited by the protocols to a value of $\Pm$, the connectivity is restricted.
We introduce the following notions and notations.
Let $t$ be a node and let $P(t)$ be the nominal transmission energy of this node. 

\begin{definition}
Let $\setT$ be a sequence of nodes that constitutes a routing path. The path length is $D(\setT)=|\setT|$. The relevant energy quantities related to the paths are: 
\begin{itemize}
\item The path accumulated energy is the quantity $C(\setT)=\sum_{t \in \setT} P(t)$. 
\item The path maximum energy is the quantity $M(\setT) =\max_{t \in \setT} P(t)$. 
\end{itemize}
\end{definition}

The path accumulated energy is of interest as we want to optimize the quantity of energy expended in the-end-to-end communication, and respectively, the path maximum power as we want to find the path which maximum power does not exceed a given threshold depending on the energy sustainability of the nodes or the protocol. For example, it is unlikely that a node can sustain a nominal power of $\Pm$ equal to the power needed to transmit in a range corresponding to the entire length of a street. In this case it is necessary to find a path that uses streets with enough population to reduce the node nominal power and communication range (due to the mmWave technology limitations).

\begin{definition}\label{def_g_k}
\ 
\begin{itemize}
\item Let $G(n, \E)$ be the set of all nodes connected to the central cross with a path accumulated energy not exceeding $\E$. 
\item Let $G_k(n,\E)$ be the subset of $G(n, \E)$, where the path to the central cross should not go through more than $k$ fixed relays.
\end{itemize}
\end{definition}

\begin{definition}\label{def_g_prime_k}
Let $G'(n,\E)$ and $G_k'(n,\E)$ be the respective equivalents of $G(n,\E)$ and $G_k(n,\E)$ but with the consideration of the path maximum power instead of accumulated energy. 
\end{definition}

\subsection{Path accumulated energy}

The following theorem gives the asymptotic connectivity properties of the hyperfractal in function of the accumulated energy and in function of the path maximum power. This shows that for $n$ large, even for some sequences of energy thresholds $\E_n$ tending to zero, the sets $G_1(n,\E_n)$ asymptotically dominate the network. The same holds for the sets sequence $G'_1(n,\E_n)$.
\begin{theorem}\label{theo_g}
In an urban network with $n$ mobile nodes following a hyperfractal distribution, the following holds:
\beq
\lim_{n \to \infty}  \expect {\frac{|G_1(n, n^{-\gamma}\Pm)|}{n}}=1
\eeq
for $\gamma<\delta-1$ 

and
\beq
\lim_{n \to \infty}  \expect {\frac{|G'_1(n, n^{-\gamma}\Pm)|}{n}}=1
\eeq
for $\gamma<\delta$

where $\delta$ is the pathloss coefficient.
\end{theorem}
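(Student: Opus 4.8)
\emph{Proof strategy.} The plan is to reduce both statements to a fact about the \emph{typical mobile user} $x_0$ of Theorem~\ref{th.Campbell}. Writing $|G_1(n,\mathcal{E})|=\sum_{x_i\in\Phi}\ind{x_i\in G_1(n,\mathcal{E})}$, the integrand is a measurable function of $(x_i,\Phi,\Xi)$; conditioning on $\Xi$ (independent of $\Phi$) and applying~\eqref{e.CM-Phi} gives $\expect{|G_1(n,\mathcal{E})|}=n\,\prob{x_0\in G_1(n,\mathcal{E})}$, and since $|G_1|/n\le1$ it is enough to prove $\prob{x_0\in G_1(n,n^{-\gamma}\Pm)}\to1$ (and the analogue for $G'_1$). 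First I would condition on the level $L$ of the street carrying $x_0$, recalling $\prob{L=l}=p q^{l}$ with $q=1-p$ and that the number of other nodes on that street is $\poiss{m_l}$, $m_l=n(p/2)(q/2)^{l}$. Two model facts drive the rest: by~\eqref{eq:Pm}, nearest-neighbour traversal of a street carrying $M$ nodes costs accumulated energy $MP_M=\Pm/M^{\delta-1}$ and maximum per-hop energy $P_M=\Pm/M^{\delta}$ (the irregular-spacing corrections noted after~\eqref{eq:Pm} only add $\mathrm{polylog}(M)$ factors and are harmless); and a level-$L$ street meets the central cross $\mathcal{X}_0$ at \emph{exactly one} crossing, of type $(L,0)$, which carries a relay with probability $1-\exp(-\Theta(\rho\, p(L,0)))$ independently of $\Phi$, where I take $\rho=\Theta(n)$ as elsewhere in the paper.

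For the accumulated-energy claim, fix $\beta$ with $\gamma/(\delta-1)<\beta<1$ (possible since $\gamma<\delta-1$), chosen close enough to $1$ that $1-\tfrac{(1-\beta)d_r}{2(d_F-1)}>0$ --- always possible because this upper bound on $1-\beta$ is a fixed positive number. Put $l^\ast=l^\ast(n)=\bigl\lfloor\tfrac{(1-\beta)\log n-\log(4/p)}{\log(2/q)}\bigr\rfloor$, so that $m_l\ge 2n^{\beta}$ for every $l\le l^\ast$; since $\log(2/q)=(d_F-1)\log 2$ one has $l^\ast\sim\tfrac{1-\beta}{d_F-1}\log_2 n\to\infty$, whence $\prob{L>l^\ast}=q^{\lceil l^\ast\rceil}\to0$. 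On $\{L=l\le l^\ast\}$ I use the explicit route: traverse $x_0$'s street up to the relay at its unique crossing with $\mathcal{X}_0$, then traverse $\mathcal{X}_0$ to a node on it. A Chernoff bound gives at least $n^{\beta}$ nodes on $x_0$'s street with probability $1-\exp(-\Theta(n^{\beta}))$, uniformly in $l\le l^\ast$, so that street contributes accumulated energy $\le\Pm n^{-\beta(\delta-1)}$, and $\mathcal{X}_0$ (which carries $\Theta(n)$ nodes) contributes $O(\Pm n^{-(\delta-1)})$; both are $o(n^{-\gamma}\Pm)$ because $\beta(\delta-1)>\gamma$, so the route has accumulated energy $\le n^{-\gamma}\Pm$ for $n$ large. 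The probability that the required relay is present is $1-\exp\bigl(-\Theta(n\,((1-p_r)/2)^{l^\ast})\bigr)=1-\exp\bigl(-\Theta(n^{\,1-(1-\beta)d_r/(2(d_F-1))})\bigr)\to1$ uniformly in $l\le l^\ast$, using $((1-p_r)/2)^{l^\ast}=2^{-(d_r/2)l^\ast}$ and $(d_r/2)\log 2=\log(2/(1-p_r))$. Putting the (independent) dense-street, dense-cross and relay events together, $\prob{x_0\in G_1(n,n^{-\gamma}\Pm)}\ge\sum_{l=0}^{l^\ast}pq^{l}(1-o(1))\to1$.

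The maximum-energy claim is then immediate by the same argument with the per-hop cost $\Pm/M^{\delta}$ in place of the street cost $\Pm/M^{\delta-1}$: fix $\beta'$ with $\gamma/\delta<\beta'<1$ (possible since $\gamma<\delta$), again close to $1$, demand at least $n^{\beta'}$ nodes on $x_0$'s street, and get maximum route energy $\le\Pm n^{-\beta'\delta}+O(\Pm n^{-\delta})=o(n^{-\gamma}\Pm)$ since $\beta'\delta>\gamma$. This is exactly why the admissible range is $\gamma<\delta$ here versus $\gamma<\delta-1$ above: summing the $M$ equal per-hop powers along a street multiplies the energy by $M$, i.e.\ costs one unit of the exponent.

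The step I expect to be the main obstacle is showing that a relay occupies the \emph{single} admissible crossing --- the meeting point of $x_0$'s street with $\mathcal{X}_0$ --- with probability tending to $1$ \emph{uniformly} over all street levels up to $l^\ast\asymp\log n$: the constraint $k=1$ forbids any detour through auxiliary streets, so the whole weight of the estimate rests on that one crossing, and making it go through forces the delicate bookkeeping between $\delta$ (via the choice of $\beta$), $d_F$ and $d_r$ --- in particular the observation that one can always choose $\beta<1$ large enough to satisfy both $\beta(\delta-1)>\gamma$ and $1-\tfrac{(1-\beta)d_r}{2(d_F-1)}>0$ --- together with the scaling convention $\rho=\Theta(n)$. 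The remaining ingredients, namely the Campbell--Mecke reduction, Poisson concentration on the (at most three) traversed streets, and the geometric tail of $L$, are routine.
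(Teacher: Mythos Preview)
The paper does not actually prove Theorem~\ref{theo_g}: after stating it, the text passes directly to Lemma~\ref{lemma:existence} (the Chernoff-type concentration for the street population $N_H(n)$) and its corollary, both of which are then invoked in the proof of Theorem~\ref{theo_energy1}; no argument is ever assembled for Theorem~\ref{theo_g} itself. So there is no paper proof to compare against.

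That said, your proposal is correct and is precisely the argument the paper's surrounding machinery points to. The Campbell--Mecke reduction via~\eqref{e.CM-Phi} to the typical user $x_0$, the truncation of its street level at $l^\ast\asymp\frac{1-\beta}{d_F-1}\log_2 n$, the concentration on the street population (this is exactly Lemma~\ref{lemma:existence}; note only that the paper fixes the total at exactly $n$ mobiles, so $N_H(n)$ is binomial rather than Poisson, but the Chernoff bound and its conclusion are identical), and the relay-presence estimate at the unique $(L,0)$ crossing under $\rho=\Theta(n)$ all mirror the computations carried out in the proof of Theorem~\ref{theo_energy1}. The observation that a level-$L$ street with $L\ge1$ meets $\mathcal{X}_0$ in exactly one point, so that the $k=1$ constraint forces everything onto that single crossing, is the right geometric input. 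Your explanation of the one-unit shift in the admissible exponent ($\gamma<\delta-1$ versus $\gamma<\delta$) as coming from the extra factor $M$ when summing per-hop powers along a street is also the correct reason.

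Two cosmetic slips that do not affect the argument: by~\eqref{eq:intersection} the factor $\rho$ is already inside $p(L,0)$, so the relay probability is $1-\exp(-p(L,0))$ rather than $1-\exp(-\rho\,p(L,0))$ --- your subsequent formula $1-\exp\bigl(-\Theta(n((1-p_r)/2)^{l^\ast})\bigr)$ is correct regardless; and $\prob{L>l^\ast}=q^{\,l^\ast+1}$ since $l^\ast$ is already an integer.
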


The following lemma ensures the existence of nodes in a street (with proof in the Appendix).
\begin{lemma}\label{lemma:existence}
There exists $a>0$ such that, for all integers $H$ and $n$, the probability that a street of level $H$ contains less than $n\lambda_H/2$ nodes or more than $2n\lambda_H$ nodes is smaller than $\exp(-an\lambda_H)$.
\end{lemma}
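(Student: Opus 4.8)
The plan is to reduce the statement to a standard Chernoff-type concentration bound for a Poisson count, so that the hyperfractal structure enters only through the value of the mean. First I would pin down the law of the number $N_H$ of nodes lying on one fixed street, i.e. a single unit-length segment, of level $H$. By the self-similar construction of $\Phi$ recalled after~\eqref{eq:dens_mobiles}, each node lands independently on any prescribed level-$H$ segment with probability $(p/2)(q/2)^{H}$, so that the expected number of nodes on such a segment is $n(p/2)(q/2)^{H}=n\lambda_H$; since the total number of nodes is Poisson, the thinning property of Poisson processes gives that $N_H\sim\poiss{n\lambda_H}$. (If one instead conditions on exactly $n$ nodes, $N_H$ is Binomial with the same mean, and the rest of the argument goes through unchanged.)

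Next I would write $\mu=n\lambda_H$ and apply the exponential moment bound $\E[e^{tN_H}]=\exp(\mu(e^{t}-1))$ to the two one-sided deviations. Optimizing $\exp(\mu(e^{t}-1)-2\mu t)$ over $t>0$ gives $\prob{N_H\ge 2\mu}\le(e/4)^{\mu}$, and optimizing $\exp(\mu(e^{-t}-1)+\tfrac{1}{2}\mu t)$ over $t>0$ gives $\prob{N_H\le\mu/2}\le(2/e)^{\mu/2}$. Since $e/4<1$ and $2/e<1$, each of these is at most $e^{-c\mu}$ with the fixed constant $c=\min\{\ln(4/e),\tfrac{1}{2}\ln(e/2)\}>0$, independent of $H$ and $n$. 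A union bound over the two deviations gives $\prob{N_H<\mu/2\text{ or }N_H>2\mu}\le 2e^{-c\mu}$, and absorbing the factor $2$ into the exponent yields the claimed $\exp(-a\,n\lambda_H)$ for any $a<c$, uniformly in $H$ and $n$.

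The step I expect to require the most care — rather than the routine Chernoff calculation — is the small-mean regime. When $n\lambda_H<1$ the interval $[\,n\lambda_H/2,\;2n\lambda_H\,]$ contains no integer, so the complementary (``good'') event is literally empty, and the displayed exponential bound is meaningful (and the factor $2$ genuinely absorbable) only once $n\lambda_H$ is bounded below by a constant, i.e. for levels $H\le\log_2 n/\log_2(2/q)+O(1)$. This is precisely the range of levels that feeds into the union bound over streets in the proof of Theorem~\ref{theo_g}, so I would state the lemma with this quantitative restriction on $H$ in the appendix; the finitely many higher levels together carry only a negligible fraction of the mobiles and can be discarded separately there.
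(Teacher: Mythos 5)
Your argument is essentially the paper's: the appendix also proves the lemma by a Chernoff bound on the exponential moment, writing $\mathbb{E}[e^{zN_H(n)}]=\left(1+(e^z-1)\lambda_H\right)^n$ (the Binomial form you mention as the conditioned case) and using a small-$z$ expansion instead of your optimized Poisson-type constants $(e/4)^{\mu}$ and $(2/e)^{\mu/2}$; the content is the same. Your closing caveat is, however, a genuine point that the paper's proof silently skips: the appendix only establishes the two one-sided tails and never addresses the union-bound factor of $2$, which indeed cannot be absorbed uniformly in $H$ and $n$ (for $n\lambda_H<1/2$ the two-sided event has probability $1$, since $N_H=0$ is already ``too few'' and $N_H\ge 1$ is ``too many''), so the statement really does require $n\lambda_H$ bounded below by a constant, exactly the restriction you propose and which is harmless in the way the lemma is used.
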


The following corollary gives a result on the scaling of the number of nodes in a segment of street and the accumulated energy, getting us one step closer to the results we are looking for. 
\begin{corollary}
Let $0<\phi\le 1$, assume an interval corresponding to a fraction $\phi$ of the street length. If the interval is on a street of level $H$, the probability that it contains less than $\phi\lambda_Hn/2$ nodes and it is covered with accumulated energy greater than $\phi (n\lambda_H)^{1-\delta}\Pm$ is smaller than $e^{-a \phi\lambda_Hn}$.
\end{corollary}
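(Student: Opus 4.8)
The plan is to reduce the statement to a one‑dimensional Poisson tail estimate for the number of nodes inside the interval, and then to observe that this estimate is exactly the one already proved for Lemma~\ref{lemma:existence}.

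First I would set up the Poisson count. Fix a street of level $H$ and an interval $I$ on it occupying a fraction $\phi$ of the street length. The restriction of $\Phi$ to $I$ is a one‑dimensional Poisson process, so the node count $N:=\Phi(I)$ is a Poisson random variable with mean $\mu:=\phi n\lambda_H$ (here $\lambda_H$ is the normalized per‑street density used in Lemma~\ref{lemma:existence}, so that a full level‑$H$ street carries mean $n\lambda_H$ and the interval of relative length $\phi$ carries mean $\phi n\lambda_H$). Let $B$ denote the event in the statement; by definition $B$ is the intersection of $\{N<\phi n\lambda_H/2\}$ with the further event that $I$ is covered with accumulated energy exceeding $\phi(n\lambda_H)^{1-\delta}\Pm$. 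In particular $B\subseteq\{N<\mu/2\}$ --- the energy clause only shrinks the event --- so it suffices to bound $\prob{N<\mu/2}$.

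Second, I would apply the Chernoff bound for Poisson variables exactly as in the proof of Lemma~\ref{lemma:existence}: for $X\sim\poiss{\mu}$ one has $\prob{X<\mu/2}\le e^{-a\mu}$ for a universal constant $a>0$, since the large‑deviation rate of a Poisson variable at half its mean is a positive constant times the mean. Taking the same $a$ as in Lemma~\ref{lemma:existence} and $\mu=\phi n\lambda_H$ gives $\prob{N<\phi n\lambda_H/2}\le e^{-a\phi n\lambda_H}$, hence $\prob{B}\le e^{-a\phi n\lambda_H}$, which is the claim. It is worth noting explicitly why the corollary is stated with the (unused) energy clause: it is the form in which it will be invoked together with Lemma~\ref{lemma:existence} applied to the full street. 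The street count $m$ controls the nominal power $P_m=\Pm/m^\delta$, and the combination shows that on a high‑probability event every relevant sub‑interval has at least half its expected nodes and is therefore covered by nearest‑neighbor routing with accumulated energy of order $\phi(n\lambda_H)^{1-\delta}\Pm$ (about $\phi n\lambda_H$ hops, each costing about $\Pm(n\lambda_H)^{-\delta}$, under the convention $P_m=\Pm/m^\delta$).

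I do not expect a genuine analytic obstacle here; the content of the corollary is Lemma~\ref{lemma:existence} restricted to a sub‑interval. The only points requiring care are bookkeeping: verifying that the Chernoff constant from Lemma~\ref{lemma:existence} transfers verbatim to an interval of mean $\phi n\lambda_H$ (it does, because the exponent scales linearly in the mean, so no dependence on $\phi$, $H$ or $n$ is introduced), and keeping the normalized‑versus‑unnormalized reading of $\lambda_H$ consistent with the exponent $1-\delta$ appearing in the energy threshold.
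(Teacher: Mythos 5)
Your reduction bounds only the event $\{N_H(n,\phi)<\phi n\lambda_H/2\}$ and deliberately discards the energy clause as vacuous under a literal reading of the ``and''. That is not how the corollary is proved, nor how it is used, in the paper. The paper's proof writes the accumulated energy needed to cover the interval as $\Pm N_H(n,\phi)/N_H^\delta(n)$, where $N_H(n)$ is the count on the \emph{whole} street (which sets the nominal power $P_m=\Pm/m^\delta$), and then applies the Chernoff argument of Lemma~\ref{lemma:existence} to \emph{both} random variables $N_H(n)$ and $N_H(n,\phi)$: the lower-tail bound $N_H(n)\ge n\lambda_H/2$ caps the per-hop power, while the bound on $N_H(n,\phi)$ (obtained by replacing $\lambda_H$ with $\phi\lambda_H$ in the moment generating function $\mathbb{E}[e^{zN_H(n,\phi)}]=\left(1+\phi\lambda_H(e^z-1)\right)^n$) caps the number of hops, so that with probability $1-O(e^{-a\phi n\lambda_H})$ the interval both contains at least $\phi n\lambda_H/2$ nodes and can be covered with accumulated energy of order $\phi(n\lambda_H)^{1-\delta}\Pm$. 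This is exactly the form in which the corollary is invoked in the proof of Theorem~\ref{theo_energy1} to bound $E(\setT)$, so the event that must be controlled is the \emph{union} of the two bad events (too few nodes, or too expensive coverage), not their intersection; the energy clause is substantive, not decorative.

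Concretely, the missing piece in your write-up is the energy bound itself: you never control $N_H(n)$, and without the lower-tail estimate on the full-street count the nominal power $\Pm/N_H(n)^\delta$, hence the accumulated energy over the interval, is not bounded on any high-probability event. Your closing remark sketches the right combination (roughly $\phi n\lambda_H$ hops, each of power roughly $\Pm(n\lambda_H)^{-\delta}$), but it is offered as commentary rather than carried out as part of the proof, and it additionally needs the upper bound $N_H(n,\phi)\le 2\phi n\lambda_H$ from the same Chernoff argument. A secondary, minor mismatch: you take the interval count to be Poisson with mean $\phi n\lambda_H$, whereas the paper works conditionally on exactly $n$ nodes with the binomial-type generating function above; the Chernoff bound goes through either way, so this affects only the constant $a$.
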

\begin{proof}
This is a slight variation of the previous proof.
If we denote by $N_H(n,\phi)$ the number of nodes on the segment, we have
$\mathbb{E}[e^{tN_H(n,\phi)}]=\left(1+\lambda_H \phi(e^z-1)\right)^n$. The previous proof applies by replacing $\lambda_H$ by $\phi\lambda_H$.
The accumulated energy has the expression $\Pm\frac{N_H(n,\phi)}{N_H^\delta(n)}$. Further applying the previous reasoning to each of the random variables $N_H(n)$ and $N_H(n,\phi)$ gets the result. 
\end{proof}

Throughout the rest of the paper, we only consider the cases where $d_F>3$ and $d_r<d_F-1$, {\it i.e.} $(2/q)^2<2/(1-p_r)$.

The following theorem is the \textbf{main result} of our paper and shows that increasing the path length decreases the accumulated energy. In fact, for $n \to \infty$, the limiting energy goes to zero. 

\begin{theorem}\label{theo_energy1}
In a hyperfractal with $n$ nodes, with nodes of fractal dimension $d_F$ and relays of fractal dimension $d_r$, 
the shortest path of accumulated energy $E_n=c_E n^{(1-\delta)(1-\alpha)}\Pm$, where $c_E>0$ and $\alpha<1$, between two nodes belonging to the giant component $G_1(n,E_n)$, passes through a number of hops : 
\beq \label{eq:energy1}
D_n=O(n^{1-\alpha/(d_F-1)}) 
\eeq
\end{theorem}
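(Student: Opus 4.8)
The plan is to route an arbitrary source node to the central cross along a descending sequence of street levels, using exactly one relay, and to balance the two competing effects: deeper (higher-level) streets are cheaper in energy per hop (because the nominal power scales like $n^{-\delta}$ times the inverse street population, and population on level $H$ is $\Theta(n\lambda_H)=\Theta(n q^H)$), but they are more numerous and more sparsely populated, so a path that insists on very deep streets collects many hops. First I would fix a truncation level $H^*=H^*(n)$ and argue, via Lemma \ref{lemma:existence} and the Corollary, that asymptotically almost every node lies on a street of level $H\le H^*$ that (i) contains $\Theta(n\lambda_H)$ nodes, and (ii) is connected to the central cross through a single relay with probability $1-e^{-\Theta(\rho p_r^2((1-p_r)/2)^H)}$; summing the complementary (bad) probabilities over $H\le H^*$ and over the $\Theta(2^H)$ streets per level, with $\rho=\Theta(n)$, shows the expected fraction of nodes not captured by $G_1$ in this way tends to $0$, exactly as in the proof of Theorem \ref{theo:relays} and Theorem \ref{theorem_giant}. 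This identifies $G_1(n,E_n)$ asymptotically with the whole network, so it suffices to bound the hop count of the constructed path.

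Next I would compute the two budgets along such a path. A node on a level-$H$ street reaches the relay in at most $N_H=\Theta(n\lambda_H)=\Theta(nq^H)$ hops (covering a fraction of the street), each hop of nominal energy $P_{N_H}=\Pm/N_H^\delta$, so the accumulated energy contributed by that street segment is $O(\Pm N_H^{1-\delta})=O(\Pm (nq^H)^{1-\delta})$. Crossing down one level at a relay and continuing along the central cross or the next quadrant's analogue contributes a geometrically comparable amount, so the total accumulated energy of the whole path is dominated by its cheapest (deepest) street, i.e. $C(\setT)=O(\Pm (n q^{H^*})^{1-\delta})$; the total hop count is $D(\setT)=O(\sum_{H\le H^*} nq^H)=O(n)$ trivially, but the useful bound comes from choosing $H^*$ as small as the energy budget permits. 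Setting the energy bound equal to the prescribed $E_n=c_E n^{(1-\delta)(1-\alpha)}\Pm$ forces $(nq^{H^*})^{1-\delta}=\Theta(n^{(1-\delta)(1-\alpha)})$, i.e. $n q^{H^*}=\Theta(n^{1-\alpha})$, i.e. $q^{H^*}=\Theta(n^{-\alpha})$, hence $H^*=\Theta\!\big(\tfrac{\alpha\log n}{\log(1/q)}\big)$. Plugging back, the hop count is $D_n=O(n q^{H^*}\cdot(\text{number of levels traversed}))$; more carefully, the dominant term in $\sum_{H\le H^*} n q^H$ is the $H=0$ term $\Theta(n)$ on the central cross itself, so the real gain must come from not traversing the central cross fully but stopping at the first relay — here I would use that the relay the source street connects to sits at a crossing $\mathcal{X}_H\cap\mathcal{X}_V$ with $V$ also $O(H^*)$, so the portion of the central cross actually used is a fraction $\Theta(2^{-H^*})$, giving $D_n=O(n q^{H^*})=O(n^{1-\alpha})$ — and then sharpen the exponent to $1-\alpha/(d_F-1)$ by optimizing the split of the budget between "short deep detour" and "long shallow trunk," which is where the $d_F-1$ denominator enters through $(1/2)^{d_F}=q/4$, i.e. $2^{H}$ streets of density $q^{H}$ give the relation $2\cdot q = 4^{1-1/d_F}\cdot\ldots$; the clean way is to note $n q^{H}\cdot 2^{-H}$-type quantities are governed by $q/2 = 2^{-(d_F-1)}$, so balancing hop count $\propto n\,(q/2)^{H^*}=n\,2^{-(d_F-1)H^*}$ against the energy constraint $q^{H^*}=n^{-\alpha}$, i.e. $2^{-d_F H^*}=\Theta(n^{-\alpha}\cdot 4^{-H^*})$, yields $2^{-(d_F-1)H^*}=\Theta(n^{-\alpha/(d_F-1)}\cdot(\text{lower order}))$ hence $D_n=O(n^{1-\alpha/(d_F-1)})$.

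The main obstacle, and the step I would spend the most care on, is the second: correctly accounting for the hops spent on the shallow "trunk" streets (levels near $0$, which have $\Theta(n)$ nodes and would naively cost $\Theta(n)$ hops) versus the deep "detour" street. The resolution is that a relay connecting a deep level-$H$ street to the cross sits at a specific crossing whose other coordinate $V$ is itself typically of order $H$ (by the geometric law of $W$), so the path only uses an exponentially short fraction $\Theta(2^{-V})$ of each trunk street before turning; the bookkeeping of these fractional coverages, combined with the Corollary's tail bound applied simultaneously to all $O(H^*2^{H^*})$ relevant segments (a union bound that still goes to zero since $n\lambda_{H^*}=nq^{H^*}=n^{1-\alpha}\to\infty$ faster than $\log$ of the number of segments), is the technical heart. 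I would also need the standing assumptions $d_F>3$ and $d_r<d_F-1$ precisely here: they guarantee $\rho p_r^2((1-p_r)/2)^{H^*}\to\infty$ for the chosen $H^*=\Theta(\log n)$ (so relays do exist on the deep streets we want) and that the relay-count sum from Theorem \ref{theo:relays} does not blow up, i.e. $2/d_r<1$ and the geometric series defining the trunk-coverage converge.
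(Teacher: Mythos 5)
There is a genuine gap: your construction never produces the detour that the bound actually rests on. In the paper's argument the problem is reduced to a source and destination sitting on the central cross, and the path is an explicit four\-/segment route through \emph{three} relays: a short piece of the cross of length $L(x,0)=O(2^{-x})$ to a relay giving access to a street of level $x$, then that level-$x$ street up to the relay at its crossing with a \emph{second} level-$x$ street (this relay exists w.h.p.\ because $\rho\,p(x,x)\to\infty$, which is exactly where the standing assumption relating $d_r$ and $d_F-1$ and $\alpha<1$ is used), then symmetrically back to the cross and to the destination. The two deep segments dominate the \emph{energy}, $O\bigl((n\lambda_x)^{1-\delta}\Pm\bigr)$, while the two cross segments dominate the \emph{hop count}, $O(n2^{-x})$. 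Your path only takes the source to the cross through one relay; you then notice yourself that the level-$0$ portion costs $\Theta(n)$ hops, and the patch you propose (the access relay sits at a crossing with second coordinate $V=O(H^*)$, so only a $2^{-H^*}$ fraction of the cross is used) does not work: the access relay of a node in $G_1$ is at the crossing of its own street with the cross itself, and the stretch of cross separating the access points of an arbitrary source--destination pair is generically a constant fraction of the cross. Avoiding that stretch is precisely what the two level-$x$ streets joined by a relay at their mutual crossing are for, and without them no choice of $H^*$ gives a sub-linear hop count.

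Several quantitative slips also break the exponent. The per-street population at level $H$ is $\Theta\bigl(n(q/2)^H\bigr)$, not $\Theta(nq^H)$ (the level-$H$ intensity is spread over $2^{H+1}$ streets); the energy budget $E_n=c_E n^{(1-\delta)(1-\alpha)}\Pm$ therefore fixes $(q/2)^x=n^{-\alpha}$, and since $q/2=2^{-(d_F-1)}$ this gives $2^{-x}=n^{-\alpha/(d_F-1)}$, so the dominant hop count is $\Theta(n2^{-x})=O\bigl(n^{1-\alpha/(d_F-1)}\bigr)$ directly --- no further ``balancing'' is needed, and the algebra you sketch ($2^{-d_FH^*}=\Theta(n^{-\alpha}4^{-H^*})$ implying $2^{-(d_F-1)H^*}=\Theta(n^{-\alpha/(d_F-1)})$) is not valid. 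Relatedly, your opening trade-off is inverted: sparse (deep) streets are \emph{more} expensive per hop and per traversal, since $P_m=\Pm/m^\delta$ and the accumulated cost $\Pm m^{1-\delta}$ grow as $m$ shrinks; they are attractive because they carry few hops and shortcut the cross, and the energy budget is what caps how deep you may go. Finally, the relay-existence condition you invoke, $\rho p_r^2((1-p_r)/2)^{H^*}\to\infty$, is the one for a deep-street/cross crossing; the binding requirement is $\rho\,p(x,x)\to\infty$ for the crossing of the \emph{two} deep streets, and the coverage-of-$G_1$ discussion in your first paragraph is not needed, since the theorem already assumes both endpoints lie in $G_1(n,E_n)$.
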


Although the source and the destination belong to $G_1(n,E_n)$, it is not necessary that all the nodes constituting the path also belong to $G_1(n,E_n)$, i.e., the path may include nodes that are more than one hop from the central cross. 
\begin{remark}
We have the identity 
\begin{equation}
\left(\frac{E_n}{\Pm}\right)^{1/(\delta-1)}D_n^{d_F-1}=O(n^{d_F-2}).
\end{equation}
\end{remark}

Let us now prove the theorem.
\begin{proof}
The main part of our proof is to consider the case when the source, denoted by $m_H$, and the destination, $m_V$, both stand on two different segments of the central cross. In this case, we consider the energy constraint $\frac{1}{3}E_n$. We can easily extend the result to the case when the source and the destination stand anywhere in the giant component $G_1(n,E_n)$ by taking $E_n$ as energy constraint and the theorem follows. 

When $m_H$ and $m_V$ are on the central cross, there exists a direct path that takes the direct route by staying on the central cross, more specifically, in Figure~\ref{fig:routing}~a), the segments $[SA]$,$[AO]$,$[OC]$,$[CD]$. Then, the path length is of order $\Theta(n)$ while the accumulated energy of order $\Theta(n^{1-\delta})\Pm$.

In order to significantly reduce the order of magnitude of the path hop length, one must consider a diverted path with three fixed relays, as indicated in Figure~\ref{fig:routing}~a). The diverted path proceeds into two streets of level $x$. Let $\setT$ be the path. It is considered that $x=\alpha\frac{\log n}{\log(2/(1-p_r))}$ for $\alpha<1$. 
\begin{figure}\centering
\includegraphics[scale=0.4,  trim={3cm 4cm 4cm 2cm}]{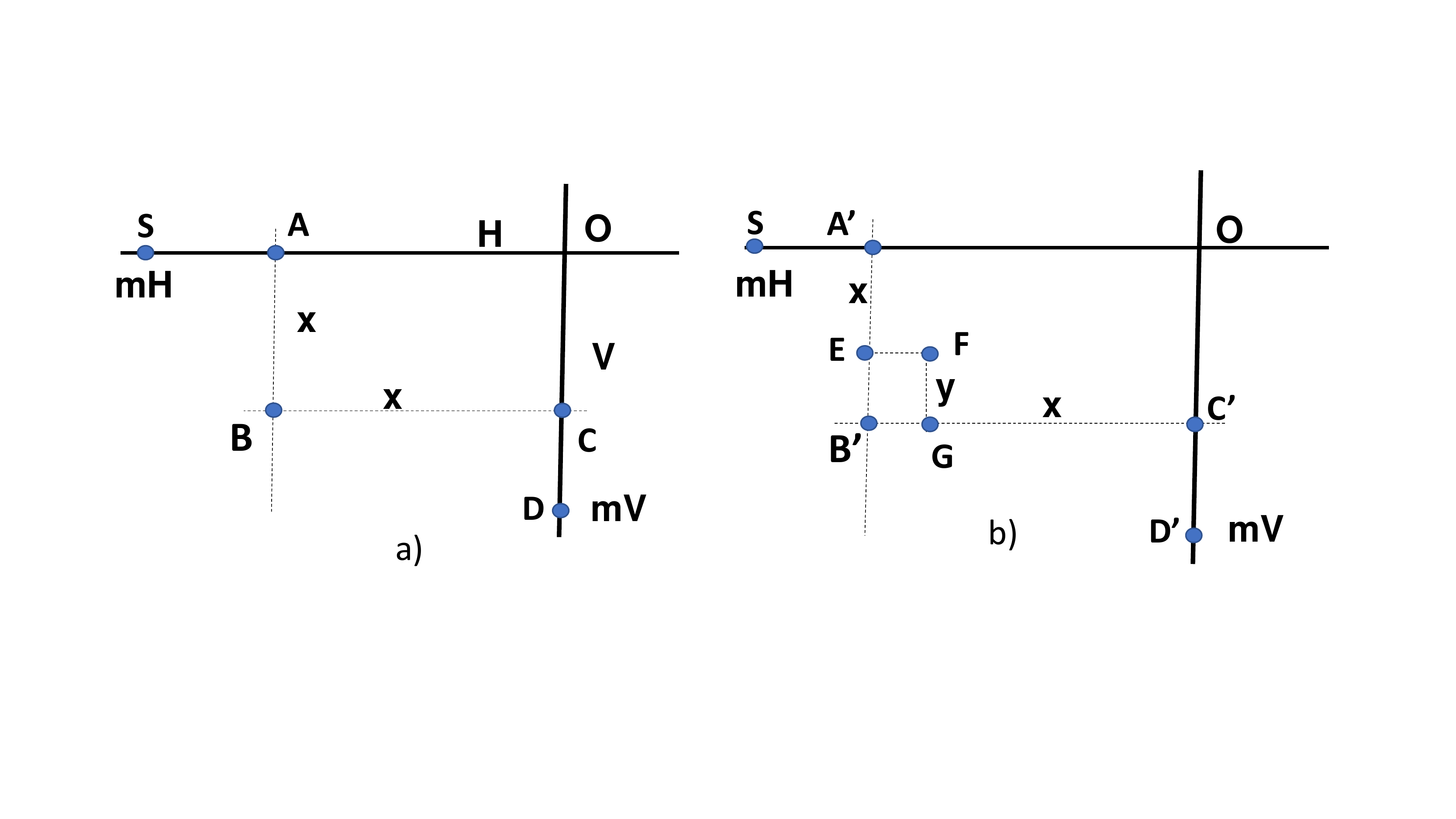}
\caption{a) Diverted path with three fixed relays (left), b) five fixed relays (right).
} \label{fig:routing}
\vspace{-0.2cm}
\end{figure}
%
The path is made of two times two segments: the segment of street $[SA]$ on the central cross which corresponds to the distance from the source to the first fixed relay to a street of level $x$,  and then the segment $[AB]$ between this relay and the fixed relay to the crossing street of level $x$. The second part of the path is symmetric and corresponds to the connection between this relay and the destination through segment $[BC]$ and $[CD]$. 

Denote by $L(x,y)$ the distance from an arbitrary position on a street of level $y$ to the first fixed relay to a street of level $x$. The probability that a fixed relay exists at a crossing of two streets of respective level $x$ and $y$ is $1-\exp(-\rho_n p(x,y))$. Since the spacing between the streets of level $x$ is $2^{-x}$, it is known from \cite{spaswin} that 
\begin{equation*}
L(x,y)\le\frac{2^{-x}}{1-\exp(-\rho_n p(x,y))}
\end{equation*}
where $\rho_n$ is the effective number of relays in the map (reminding that $\rho_n=n$ to simplify). 
The probability that the two streets of level $x$ have a fixed relay at their crossing is $1-\exp(-\rho_n p(x,x))$. With the condition $\rho=n$, one gets $\rho_n p(x,x)=n^{1-2\alpha\log(2/q)/\log(2/(1-{p_r}))}>n^{1-\alpha}$ since $2\log(2/q)<\log(2/(1-{p_r}))$. Therefore the probability that the relay does not exist decays exponentially fast. Since the accumulated energy of the path, $E(\setT)$, satisfies with high probability
\begin{equation*}
E(\setT)= O(L(x,0)n^{1-\delta}\Pm)+O((n\lambda_x)^{1-\delta}\Pm)
\end{equation*}
and the average number of nodes of the path, $D(\setT)$,  satisfies with probability tending to 1, exponentially fast:
\begin{equation*}
D(\setT)= O(L(x,0)n)+O(n\lambda_x).
\end{equation*}
Then, with the value $x=\alpha\frac{\log n}{\log(2/(1-{p_r}))}$, we detect that the main contributor of the accumulated energy are the segments $[AB]$ and $[BC]$, namely 
$E(\setT)=O\left(n^{(1-\delta)(1-\alpha)}\right).$
and let $c_E$ such that
$E(\setT)\le\frac{c_E}{3}n^{(1-\delta)(1-\alpha)}.$
The main contributor in hop count in the path is in fact in the parts which stands on the central cross, namely $[m_HA]$ and $[m_VC]$:
$D(\setT)=O\left(n^{1-\alpha/(d_F-1)}\right).$
\end{proof}

In Theorem~\ref{theo_energy1}, it is always assumed that $E_n\to 0$, since $\alpha<1$. In this case, $D_n$ spans from  $O(n^{1-1/(d_F-1)})$ to $O(n)$ (corresponding to a path staying on the central cross). When the fractal dimension $d_F$ is large it does not make a large span. In fact, if $E_n$ is assumed to be constant, {\it i.e.} $\alpha=1$, then we can have a substantial reduction in the number of hops, as described in the following theorem.


\begin{theorem}
In a hyperfractal with $n$ nodes, with nodes of fractal dimension $d_F$ and relays of fractal dimension $d_r$, the shortest path of accumulated energy $E_n=v_E\Pm$ with $v_E>6$,  between two nodes belonging to the giant component $G_1(n,E_n)$, passes through a number of hops :
\begin{equation*}
D_n = O\left(n^{1-\frac{2}{d_r(1+1/d_F)}}\right)
\end{equation*}

\end{theorem}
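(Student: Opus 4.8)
The plan is to follow the same diverted-path strategy as in the proof of Theorem~\ref{theo_energy1}, but now with a path that passes through \emph{five} fixed relays instead of three, as sketched in Figure~\ref{fig:routing}~b). The intuition is that since the energy budget $E_n=v_E\Pm$ is now a constant (that is, $\alpha=1$ in the previous notation), we can afford to route through streets of a much higher level $x$ — close to $\frac{\log n}{\log(2/(1-p_r))}$ — before the per-hop energy on those sparse streets becomes prohibitive. A higher $x$ means the detour segments on the central cross are shorter (length $\sim 2^{-x}$ times the relay-spacing correction), which is what drives the hop count down. The extra pair of relays is needed because, to reach a street of such a high level starting from the central cross without spending too much energy, the path must make an intermediate stop on a street of some moderate level $x'$, creating a two-stage descent into the fractal on each side (hence $5=1+2+2-1$ relays after accounting for the shared central relay, matching the figure).

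Concretely, I would parametrise the path by the level $x=\beta\frac{\log n}{\log(2/(1-p_r))}$ of the innermost streets and an intermediate level $x'<x$ of the streets used as a ``landing strip'', then write $E(\setT)$ and $D(\setT)$ as sums of the contributions of the (at most six) segments, exactly as in the previous proof: each segment on a street of level $y$ contributes $O(L(\cdot,y)\,n)$ hops and $O(L(\cdot,y)\,n^{1-\delta}\Pm)$ energy if it is a ``connector'' segment, and $O(n\lambda_y)$ hops and $O((n\lambda_y)^{1-\delta}\Pm)$ energy if it is a ``main traversal'' segment, using Lemma~\ref{lemma:existence} and its Corollary to control the number of nodes and the accumulated energy with probability tending to $1$ exponentially fast. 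I would again invoke the bound $L(x,y)\le 2^{-x}/(1-\exp(-\rho_n p(x,y)))$ from~\cite{spaswin} and check, using the standing assumption $d_r<d_F-1$ (equivalently $(2/q)^2<2/(1-p_r)$) together with $\rho_n=n$, that all the relevant relay-existence probabilities $1-\exp(-\rho_n p(x,x))$, $1-\exp(-\rho_n p(x',x))$, etc., stay bounded away from $0$ — this is where $\beta$ must be chosen strictly less than $1$ and the choice $v_E>6$ (three thirds of the budget per symmetric half, with some slack) enters to absorb the constants from the six segments.

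The optimisation step is then to choose $\beta$ (and $x'$) as large as possible subject to the energy constraint $E(\setT)\le v_E\Pm$. The dominant energy term will come from the main traversal of the level-$x$ streets, of order $(n\lambda_x)^{1-\delta}\Pm$ with $\lambda_x=n(p/2)(q/2)^x$, and from the connector segments whose length is governed by $2^{-x}/(1-\exp(-\rho_n p(x',x)))$; balancing these against a constant forces a specific value of $\beta$, and substituting that $\beta$ into $D(\setT)$ — whose dominant term is the connector on the central cross, of order $L(x,0)\,n \sim 2^{-x} n / (1-\exp(-\rho_n p(x',0)))$ — yields the exponent $1-\frac{2}{d_r(1+1/d_F)}$ after expressing $2^{-x}=n^{-\beta \log 2/\log(2/(1-p_r))}=n^{-2\beta/d_r}$ and collecting the powers of $n$. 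The factor $(1+1/d_F)$ in the denominator is the signature that the two-stage descent (one stage controlled by $d_r$, the other feeding back through $\lambda_x$ which carries the $d_F$-dependence) is what is being optimised.

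The main obstacle I anticipate is the bookkeeping in the two-stage descent: getting the exponents of $n$ exactly right for all six segments simultaneously, and verifying that with the optimal $\beta$ \emph{none} of the relay-existence probabilities degenerate and \emph{none} of the subdominant energy terms secretly dominate. In particular one must confirm that the intermediate level $x'$ can be chosen so that the $[S,\cdot]$–type connector on the central cross (which controls $D_n$) and the $[\cdot,\cdot]$ connector between the level-$x'$ and level-$x$ streets are both cheap in energy yet the latter is short in hops; this is a genuine two-parameter balancing rather than the one-parameter balancing of Theorem~\ref{theo_energy1}, and it is the only place where real work beyond the earlier proof is required. Everything else — the Chernoff-type concentration, the Campbell–Mecke reduction to a typical node, the $L(x,y)$ bound — is inherited verbatim from the machinery already established above.
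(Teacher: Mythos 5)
Your overall strategy coincides with the paper's: the paper also proves this bound by switching to the five-relay diverted path of Figure~\ref{fig:routing}~b) once it has observed that the three-relay construction cannot go below $n^{1-1/d_r}$. The difference is that the paper does not carry out the five-relay optimisation at all --- it simply cites~\cite{spaswin} for the exponent $1-\frac{2}{d_r(1+1/d_F)}$ --- whereas you propose to actually derive it. That extra ambition is welcome, but the mechanism you describe for pinning down the optimal levels is the wrong one, and it is precisely the step you yourself flag as ``the only place where real work is required.'' In the constant-budget regime $E_n=v_E\Pm$, the energy constraint is \emph{slack by construction}: each of the at most six segments is confined to a single street, so by the nominal-power rule it costs at most about $\Pm$ (either one long hop across a nearly empty street, or $m\Pm/m^\delta=\Pm m^{1-\delta}\le\Pm$ hops along a populated one), and $v_E>6$ absorbs all of them plus the escape hop. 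Hence ``choosing $\beta$ as large as possible subject to $E(\setT)\le v_E\Pm$'' and ``balancing $(n\lambda_x)^{1-\delta}\Pm$ against a constant'' does not determine anything: that balance would force $n\lambda_x=\Theta(1)$, i.e.\ $x\approx\log n/\log(2/q)$, which gives a hop count of order $n^{1-1/(d_F-1)}$, not the claimed $n^{1-2/(d_r(1+1/d_F))}$. The binding constraints are instead the relay-existence conditions at the crossings actually used by the five-relay path (each crossing of streets of levels $h$ and $v$ requires roughly $h+v\le\log n/\log(2/(1-p_r))$ so that $1-\exp(-\rho_n p(h,v))$ does not vanish), and the optimal levels come from equalising the \emph{hop-count} contributions of the central-cross connectors and the intermediate traversal segments under those constraints. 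This is where $d_F$ (through $\lambda$ on the intermediate streets) and $d_r$ (through the crossing probabilities) combine to give the factor $(1+1/d_F)$.

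Concretely, your sketch never fixes which relay sits at the crossing of which pair of levels (your $p(x',x)$, $p(x',0)$ appear in places where the path as drawn would need $p(x,0)$ or a different pair), so the system of constraints that produces the claimed exponent is not actually written down, and the two-parameter balancing is left as an acknowledged ``obstacle'' rather than resolved. To repair the argument you should either reproduce the optimisation from~\cite{spaswin} --- list the six segments, attach to each its level, its hop count (of the form $n2^{-y}$, $n(q/2)^{y}2^{-z}$ or $n(q/2)^{z}$ up to constants), impose the relay-existence inequalities on the level sums, and minimise the maximum term --- or do as the paper does and invoke~\cite{spaswin} explicitly, keeping only the easy verification that the resulting path respects the energy budget $v_E\Pm$ with $v_E>6$. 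As written, the proposal asserts the right final exponent but the route you describe to it would not produce it.
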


The theorem shows the achievable limits of number of hops when the constraint on the path energy is let loose. In fact, this allows taking the path with five fixed relays (Figure \ref{fig:routing}). The condition on $v_E>6$ comes from the 5 relays plus the step required to escape the giant component.  
\begin{remark}
When $d_r\to 2$ then $D_n=O( n^{1/(d_F+1)})$, and the hyperfractal model is behaving like a hypercube of dimension $d_F+1$. Notice that in this case $D_n$ tends to be $O(1)$ when $d_F\to\infty$. 
\end{remark}
\begin{proof}
In the proof of Theorem~\ref{theo_energy1}, it is assumed that $x<\frac{\log n}{\log(2/(1-{p_r}))}$ in order to ensure that the number of hops on the route of level $x$ tends to infinity. However, we can rise the parameter $x$ in the range $\frac{\log n}{\log(2/(1-{p_r}))}\le x <\frac{\log n}{2\log(2/q)}$.

We have $n\lambda_x\to 0$. In this case, $E(\setT)\to 2\Pm$ since the streets of level $x$ are empty of nodes with probability tending to 1. Let us denote $x=\beta\frac{\log n}{2\log(2/q)}$ with $\beta<1$. We have  $D(\setT)=O(L(x,0)n)=O(n^{1-\beta/d_r})$. Clearly, $\beta$ cannot be greater than 1 as, in this case, the two streets of level $x$ will not hold a fixed relay with high probability and the packet will not turn at the intersection. Therefore the smallest order that one can obtain on the diverted path with three relays is limited to $n^{1-1/d_r}$, which is not the claimed one.

To obtain the claimed order, one must use the diverted path with five fixed relays, as shown in figure~\ref{fig:routing}~b). The diverted path is composed by the segments: $[SA']$,$[A'E]$,$[EF]$,$[FG]$,$[GC']$ and $[C'D']$.
It is shown in~\cite{spaswin} that the order can be decreased to $n^{1-2/((1+1/d_F)d_r)}$.
\end{proof}

\subsection{Path maximum power}

The next results revisit the previous theorems on the path accumulated energy in the corresponding case of the imposed constraint on the path maximum power.
\begin{theorem}\label{theo_energy2}
The shortest path of maximum power less than $M_n=n^{-\delta(1-\alpha)}\Pm$ with $\alpha<1$, between two nodes belonging to the giant component $G_1'(n,M_n)$, passes through a number of hops:
$$
D_n=O\left(n^{1-\alpha/(d_F-1)}\right)
$$
\end{theorem}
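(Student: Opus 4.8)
The plan is to run the proof of Theorem~\ref{theo_energy1} almost unchanged, replacing the accumulated-energy accounting by a maximum-power accounting, and then to exploit the fact that the maximum of a family is stable under concatenation whereas a sum is not, which actually makes this argument slightly cleaner than the accumulated-energy one.

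First I would reduce, exactly as in the proof of Theorem~\ref{theo_energy1}, to the case where the source $m_H$ and the destination $m_V$ lie on two perpendicular segments of the central cross $\mathcal{X}_0$. For a general pair in $G_1'(n,M_n)$, each endpoint is joined to $\mathcal{X}_0$ by a path through at most one relay whose maximum power is, by definition of $G_1'$, already at most $M_n$; and concatenating sub-paths each of maximum power $\le M_n$ again yields a path of maximum power $\le M_n$. This is precisely why, unlike in the accumulated-energy statement (where the slack $E_n/3$, resp.\ the constant $v_E>6$, had to be reserved for the escape step and the extra relays), no constant budget is lost here and one may work directly with $M_n$. So it suffices to produce a central-cross-to-central-cross route of maximum power $\le M_n$ and of hop count $O(n^{1-\alpha/(d_F-1)})$. (By Theorem~\ref{theo_g} with $\gamma=\delta(1-\alpha)<\delta$, the hypothesis that both endpoints lie in $G_1'(n,M_n)$ discards only a vanishing fraction of the nodes, so this bound genuinely describes the typical end-to-end path.)

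Next I would take the diverted three-relay route of Figure~\ref{fig:routing}~a): leave $\mathcal{X}_0$ at the first relay onto a horizontal street of some level $x$, travel along that street, turn at the relay located at its crossing with the perpendicular vertical street of level $x$, and come back to $\mathcal{X}_0$. The maximum power along this route is the larger of (i) the nominal power on the level-$0$ pieces, of order $\Pm/n^\delta$, which is automatically below $M_n=n^{-\delta(1-\alpha)}\Pm$ because $n^{-\delta}\le n^{-\delta(1-\alpha)}$ for $\alpha\ge 0$; and (ii) the nominal power on the two level-$x$ pieces, of order $\Pm/(n\lambda_x)^\delta$ by~(\ref{eq:Pm}), once Lemma~\ref{lemma:existence} (and, for the level-$0$ sub-intervals, the corollary following it) is invoked to pin the street node counts to within constant factors, with failure probability $e^{-\Theta(n\lambda_x)}$; the logarithmic largest-gap correction to~(\ref{eq:Pm}) affects only constants. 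Imposing (ii) $\le M_n$ is exactly the requirement $n\lambda_x\ge n^{1-\alpha}$, which is the very same threshold on the level of the diverted streets that the accumulated-energy constraint produced in Theorem~\ref{theo_energy1}; this is why the two hop-count bounds coincide. Accordingly I would choose $x$ to be essentially the largest level with $n\lambda_x\ge n^{1-\alpha}$; since $n\lambda_x=\Theta(n\,2^{-x(d_F-1)})$, this forces $2^{-x}=\Theta(n^{-\alpha/(d_F-1)})$ and $n\lambda_x=\Theta(n^{1-\alpha})\to\infty$, so all the concentration estimates apply.

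It then remains to count hops and to verify the relays are present. The two level-$x$ pieces contribute $\Theta(n\lambda_x)=\Theta(n^{1-\alpha})$ hops; the two pieces of $\mathcal{X}_0$ running from $m_H$, resp.\ $m_V$, to the first relay toward a level-$x$ street contribute $\Theta(L(x,0)\,n)$ hops, where by~\cite{spaswin} and~(\ref{eq:intersection}) one has $L(x,0)\le 2^{-x}/(1-\exp(-\rho_n p(x,0)))$ with $\rho_n=n$. Under the standing assumptions $d_F>3$ and $d_r<d_F-1$ one checks that $\rho_n p(x,0)\to\infty$ and $\rho_n p(x,x)\to\infty$, so the first relays and the turning relay exist with probability $1-o(1)$ and $L(x,0)=\Theta(2^{-x})$, making this contribution $\Theta(2^{-x}n)=\Theta(n^{1-\alpha/(d_F-1)})$. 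Summing the two contributions and using $1-\alpha/(d_F-1)>1-\alpha$ (since $d_F-1>1$) gives $D_n=O(n^{1-\alpha/(d_F-1)})$, after a union bound over the $O(\log n)$ streets and relays involved and an appeal to the Campbell--Mecke identity of Theorem~\ref{th.Campbell} to transfer the probability-$(1-o(1))$ statements about $m_H$, $m_V$ and the intermediate relays to statements about the typical source, destination and relay. The main difficulty is the same balancing act as in Theorem~\ref{theo_energy1}: the single level $x$ must simultaneously (a) satisfy the per-node power budget $n\lambda_x\ge n^{1-\alpha}$, (b) keep $n\lambda_x\to\infty$ so Lemma~\ref{lemma:existence} and its corollary give exponentially small deviation probabilities, and (c) keep $\rho_n p(x,x),\rho_n p(x,0)\to\infty$ so that the turning relay and the first relays are present and not too sparse; it is exactly the hypotheses $d_F>3$, $d_r<d_F-1$ (equivalently $(2/q)^2<2/(1-p_r)$) that reconcile (a)--(c).
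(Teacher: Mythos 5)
Your proposal is correct and follows essentially the argument the paper intends: Theorem~\ref{theo_energy2} is given no separate proof there, being the max-power analogue of Theorem~\ref{theo_energy1}, and you reconstruct exactly that adaptation -- the three-relay diverted path of Figure~\ref{fig:routing}~a), a level $x$ with $n\lambda_x\ge n^{1-\alpha}$ (so $2^{-x}=\Theta(n^{-\alpha/(d_F-1)})$) to meet the per-node power budget $M_n$, relay existence from $d_r<d_F-1$, and the observation that the central-cross segments dominate the hop count, giving $D_n=O(n^{1-\alpha/(d_F-1)})$. Your reduction from arbitrary endpoints of $G_1'(n,M_n)$ to endpoints on the central cross is handled with the same (acceptable) level of informality as in the paper's own reduction for $G_1(n,E_n)$.
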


It is \textbf{important to note} that although the orders of magnitude of path length $D_n$ are the same in both Theorem~\ref{theo_energy1} and Theorem~\ref{theo_energy2}, the results consider two different giant components: (accumulated)  $G_1(n,E_n)$ and (maximum) $G'_1(n,M_n)$. 
\begin{remark}
We have the identity
\begin{equation}
\left(\frac{M_n}{\Pm}\right)^{1/\delta} D_n^{d_F-1}=O(n^{d_F-1-\delta}).
\end{equation}
\end{remark}
\begin{theorem}
Let the maximum path transmitting power between two points belonging to the giant component, $G_1'(n,M_n)$ be $M_n=\Pm$. The number of hops $D_n$ on the shortest path  is $O\left(n^{1-2/(d_r(1+1/d_F)}\right)$.
\end{theorem}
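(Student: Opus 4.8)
The plan is to mimic the argument of Theorem~\ref{theo_energy1}, but push the parameter $x$ (the level of the two ``detour'' streets) up to the regime where those streets are empty with high probability, and then replace the three-relay detour by the five-relay detour of Figure~\ref{fig:routing}~b) so as to shave the exponent from $1-1/d_r$ down to the claimed $1-2/(d_r(1+1/d_F))$. First I would reduce, exactly as in Theorem~\ref{theo_energy1}, to the case where the source $m_H$ and destination $m_V$ sit on two different segments of the central cross, absorbing the remaining ``escape the giant component'' cost into a constant; this is where the $+1$ in the condition $v_E>6$ is used. With the constraint $M_n=\Pm$ the relevant quantity is no longer the accumulated energy along a street of level $x$ but the \emph{maximum} per-hop power, which on an empty street of level $x$ is governed by the largest inter-node gap on the adjacent denser streets; since each detour segment now carries only $O(1)$ nodes (indeed $n\lambda_x\to 0$), the maximum power along $[A'E],[EF],[FG],[GC']$ is $O(\Pm)$, and summing the at most five such segments keeps $M(\setT)=O(\Pm)$, consistent with $v_E>6$.

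Next I would analyse the hop count. Writing $x=\beta\frac{\log n}{2\log(2/q)}$ with $\beta<1$ (so that $\rho_n p(x,x)=n^{1-\beta}\to\infty$ and every required relay at a crossing of two level-$x$ streets exists w.h.p., decaying exponentially fast in $n$ by the Chernoff-type bound already used), the dominant contribution to $D(\setT)$ comes from the pieces on the central cross, namely $L(x,0)\cdot n$, where $L(x,0)\le 2^{-x}/(1-\exp(-\rho_n p(x,0)))$. Plugging in $x$ and using $\rho_n=n$ gives $L(x,0)=O(2^{-x})=O(n^{-\beta/d_r})$, hence $D(\setT)=O(n^{1-\beta/d_r})$. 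Letting $\beta\uparrow 1$, the best the three-relay detour achieves is $O(n^{1-1/d_r})$ — not yet the target. To close the gap I would invoke the five-relay construction of~\cite{spaswin}: by inserting a second pair of detour turns one effectively routes through streets of two nested levels and the geometric series of the segment lengths yields the improved exponent $1-2/((1+1/d_F)d_r)$; I would quote this computation from~\cite{spaswin} rather than reproduce it, checking only that the maximum-power bound is preserved (each of the six segments still carries $O(1)$ nodes, so $M(\setT)=O(\Pm)$, compatible with $M_n=\Pm$).

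The main obstacle I anticipate is \textbf{bookkeeping the maximum-power constraint rather than the accumulated one}: in Theorem~\ref{theo_energy1} the detour segments contributed negligibly to $E(\setT)$ because the accumulated energy $(n\lambda_x)^{1-\delta}\Pm$ is tiny, but here we must instead argue that the per-hop power on a \emph{near-empty} street does not blow up — i.e.\ that the streets of level $x$ adjacent to the detour relays, while sparse, still have their largest gap controlled by the $O(\log m/m)$ estimate of Section~\ref{model} uniformly over the $O(\log n)$ relevant levels, so that the nominal power stays $O(\Pm)$ up to polylogarithmic factors that do not affect the order. Making this uniform over levels, and confirming that the five-relay geometry of~\cite{spaswin} is consistent with the condition $d_r<d_F-1$ (equivalently $(2/q)^2<2/(1-p_r)$) assumed throughout, is the delicate part; the hop-count optimisation itself is a routine transcription of the three- and five-relay arguments already established.
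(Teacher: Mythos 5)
Your proposal is correct and follows essentially the paper's route: since every nominal power in the model is at most $\Pm$ by construction, the constraint $M(\setT)\le M_n=\Pm$ is vacuous, and the bound reduces to the unconstrained shortest-path result obtained via the five-relay detour, for which the paper (like you) simply quotes the $O\left(n^{1-2/((1+1/d_F)d_r)}\right)$ computation of~\cite{spaswin}. Your importing of the condition $v_E>6$ and the ``summing of the five segments'' is a harmless slip carried over from the accumulated-energy theorem — for the maximum-power criterion one takes a maximum, not a sum, and no gap/polylog bookkeeping is needed precisely because no single hop ever requires more than $\Pm$.
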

This theorem gives the path length when no constraint on transmitting power exists (the maximum transmitting power allowed is the highest power for a transmission between two neighbors in the hyperfractal map). We obtain here the same results of~\cite{spaswin}, where an infinite radio range is considered, which is not a feasible result for mmWave technology deployments.


\subsection{Remarks on the network throughput capacity}
Let us consider the scaling of the network throughput capacity with constraints on the energy.
In \cite{capacity_formula}, the authors express the throughput capacity of random wireless networks as:    
 \beq \label{eqn:capacity}
  \zeta(n)=\Theta  \left({n^2\sum\nolimits_{i\in G}\omega_{i}(n)\over \sum\nolimits_{i,j\in G} r_{ij}}\right).
    \eeq
where $\zeta(n)$ is the throughput capacity, defined as the expected number of packets delivered to their destinations per slot, $\omega_{i}(n)$ is the expected transmission rate of each node $i$ among all the nodes $n$ and $G$ is the giant component. 
In the following, denote by $C$ the transmission rate of each node. 


Using our results of Theorems~\ref{theo_energy1} and~\ref{theo_energy2} and substituting them in (\ref{eqn:capacity}), we obtain the following corollary on a lower bound of the network throughput capacity with constraints on path energy. 
\begin{corollary}
In a hyperfractal with $n$ nodes, fractal dimension of nodes $d_F$, and $\alpha<1$ and $C$ the transmission rate of each node when either 
\begin{itemize}
\item $E_n=O\left(n^{(1-\delta)(1-\alpha)}\Pm\right)$ is the maximum accumulated energy of the minimal path between any pair of nodes in the giant component $G_1(n,E_n)$ 
\end{itemize}
or
\begin{itemize}
\item $M_n=O(n^{-\delta(1-\alpha)}\Pm)$ is the maximum path power of the minimal path between any pair of nodes in the giant component $G_1'(n,M_n)$,
\end{itemize}
a lower bound on the network throughput is:
\beq
\zeta(n)=\Omega\left(C n^{\frac{\alpha}{d_F-1}} \right)
\eeq

\end{corollary}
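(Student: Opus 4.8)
The plan is to substitute the hop-count bounds of Theorems~\ref{theo_energy1} and~\ref{theo_energy2} directly into the capacity formula~\eqref{eqn:capacity}, treating the accumulated-energy and maximum-power cases in parallel. Write $G$ for the relevant giant component, i.e. $G=G_1(n,E_n)$ in the first case and $G=G_1'(n,M_n)$ in the second. First I would control the numerator. The thresholds occurring here are of the form $n^{-\gamma}\Pm$ with $\gamma=(\delta-1)(1-\alpha)<\delta-1$ in the accumulated case and $\gamma=\delta(1-\alpha)<\delta$ in the maximum-power case, so the hypotheses of Theorem~\ref{theo_g} are met and $\expect{|G|/n}\to 1$; hence $|G|=\Theta(n)$ asymptotically. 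Since every node transmits at rate $C$, i.e.\ $\omega_i(n)=C$, this gives $\sum_{i\in G}\omega_i(n)=\Theta(nC)$ and therefore $n^2\sum_{i\in G}\omega_i(n)=\Theta(n^3C)$.

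Second I would upper bound the denominator $\sum_{i,j\in G}r_{ij}$, where $r_{ij}$ is the number of hops on the shortest admissible path between $i$ and $j$. By Theorem~\ref{theo_energy1} (respectively Theorem~\ref{theo_energy2}) this shortest path uses $D_n=O(n^{1-\alpha/(d_F-1)})$ hops for any fixed pair of nodes of $G$. The point that needs care is that this bound must hold simultaneously over all $\binom{|G|}{2}=O(n^2)$ pairs: this follows from a union bound because, as established in the proofs of those theorems via Lemma~\ref{lemma:existence} and its Corollary, the bad events that could spoil the diverted path — a level-$x$ street being empty, a needed relay being absent, a street segment carrying too few nodes — each have probability decaying exponentially in a positive power of $n$, which dominates the polynomial number of pairs. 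Consequently $\sum_{i,j\in G}r_{ij}=O(|G|^2 D_n)=O(n^{3-\alpha/(d_F-1)})$, both in expectation and with high probability.

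Combining the two estimates in~\eqref{eqn:capacity} yields
\[
\zeta(n)=\Theta\!\left(\frac{n^2\sum_{i\in G}\omega_i(n)}{\sum_{i,j\in G}r_{ij}}\right)=\Omega\!\left(\frac{n^3C}{n^{3-\alpha/(d_F-1)}}\right)=\Omega\!\left(C\,n^{\alpha/(d_F-1)}\right),
\]
which is the claim; note that only the lower-bound direction of~\eqref{eqn:capacity} is needed, so only the \emph{upper} bound on the denominator matters, and the argument is identical in the maximum-power case because Theorem~\ref{theo_energy2} produces the same hop-count order $D_n=O(n^{1-\alpha/(d_F-1)})$.

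The main obstacle is precisely the uniformity over all source–destination pairs: Theorems~\ref{theo_energy1} and~\ref{theo_energy2} are phrased for ``two nodes'' of the giant component, whereas~\eqref{eqn:capacity} sums $r_{ij}$ over every pair, so one must upgrade the per-pair bound to a bound on the expectation of the sum — which is exactly what the exponentially small failure probabilities of the underlying lemmas allow. A secondary, minor point is verifying that the giant-component membership thresholds ($E_n$, $M_n$) used here fall within the regimes $\gamma<\delta-1$ and $\gamma<\delta$ required by Theorem~\ref{theo_g}, which holds since $\alpha<1$.
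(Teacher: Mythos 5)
Your proposal is correct and follows essentially the same route as the paper: the corollary is obtained by substituting the hop-count bounds of Theorems~\ref{theo_energy1} and~\ref{theo_energy2}, together with $\omega_i(n)=C$ and the $\Theta(n)$ size of the giant component from Theorem~\ref{theo_g}, directly into the capacity formula~(\ref{eqn:capacity}). Your extra care about uniformity of the hop bound over all $O(n^2)$ pairs (via the exponentially small failure probabilities) is a detail the paper leaves implicit, but it does not change the argument.
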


\begin{remark}
We notice that with $\alpha<1$ and $d_F>3$ we have $\zeta(n)$ of order which can be smaller than $n^{1/2}$ which is less than the capacity in a random uniform network with omni-directional propagation as described in~\cite{GuptaKumar}.
\end{remark}
\begin{remark}
When $\alpha=1$, {\it i.e.} with no energy constraint $E_n=c_E\Pm$ the path length can drop down to $D_n=O\left(n^{1-2/((1+1/d_F)d_r)}\right)$ and, in this case, we have $\zeta(n)=\Omega(n^{2/((1+1/d_F)d_r)})$ which tends to be in $O(n)$ when $d_F\to\infty$ and $d_r\to 2$. In this situation the capacity is of optimal order since $D_n$ tends to be $O(1)$.
\end{remark}

\section{Fitting the hyperfractal model for mobile nodes and relays to data }\label{calcul_dr}

The hyperfractal models for mobile nodes and for relays have been derived by making observations on the scaling of traffic densities and the scaling of the infrastructures, with road lengths,  distances between intersections which allow rerouting of packets, {\it etc}. 
Let us emphasize again that in the definition of the hyperfractal model, there is neither an assumption nor a condition on geometric properties (in the sense of geometric shape, strait lines, intersection angles, {\it etc}). Our description of a hyperfractal starting from the support $\mathcal{X}_0$ splitting the space into four quadrants is an example, (e.g., split by three to fit a Koch snowflake). 

In our previous works \cite{archiv17}, we have introduced a procedure which allows transforming traffic flow maps into hyperfractal by computing the fractal dimension $d_F$ of each traffic flow map then quantify the metrics of interest. We shall revisit the theoretical foundation of this procedure in order to compute the fractal dimension $d_r$ of the relay placement since as observed from Section \ref{model}, the placement of relays is dependent on the density of mobile nodes. 

\subsection{Theoretical Foundation: computation of the fractal dimension of the relays}

In this work, we state that the relaying infrastructure placement also follows a distribution with parameter of a fractal dimension, $d_r$. We thus now present a procedure for the computation of the fractal dimension of the road-side infrastructure in a city. 



The criteria for computing the fractal dimension of the road-side infrastructure are similar to the criteria used for computing the fractal dimension of the traffic flow map. 
%
%
The fitting procedure exploits the scaling between the length of different levels of the support $\mathcal{X}_l$ and the scaling  of the 1-dimensional intensity per level, $\lambda_l$. The difficulty is that the roads rarely have an explicit level hierarchy since the data we have about cities are in general about road segment lengths and average mobile nodes densities. 
To circumvent this problem, we do a ranking of the road segments in the decreasing order of their mobile density. If $\CS$ is a segment we denote $\eta(\CS)$ its density and $\CL(\CS)$ the accumulated length of the segment ranked before $\CS$ ({\it i.e.} of larger density than $\eta(\CS)$). 
For $\xi>0$ we denote $\mu(\xi)=\eta(\CL^{-1}(\xi))$. Formally $\CL^{-1}(\xi)$ is the road segment $S$ with the smallest density such that $\CL(\CS)\le\xi$. The hyperfractal dimension will appear in the asymptotic estimate of $\mu(\xi)$ when $\xi\to\infty$ via the following property:

\beq \label{eq:eq_fitting_mobiles}
\mu(\xi)=\Theta\left(\xi^{1-d_F}\right).
\eeq

The \textbf{procedure for the computation of the fractal dimension of the relays} is similar to the fitting procedure for mobile nodes, \cite{archiv17} and has the following steps. First, we consider the set of road intersection $\CI$ defined by the pair of segments $(\CS_1,\CS_2)$ such that $\CS_1$ and $\CS_2$ intersect. Let $\xi_1$, $\xi_2$ be two real numbers we define $p(\xi_1,\xi_2)$ as the probability that two intersecting segments $\CS_1$ and $\CS_2$ such that $C_l(\CS_1)\le\xi_1$ and $C_l(\CS_2)\le\xi_2$ contains a relay. The hyperfractality of the distribution of the relay distribution implies when $\xi_1,\xi_2\to\infty$:

\beq\label{eq:eq_fitting_relays}
p(\xi_1,\xi_2)=\Theta\left((\xi_1\xi_2)^{-d_r/2}\right).
\eeq
Since the probability is not directly measurable we have to estimate it via measurable samples. Indeed let $N(\xi_1,\xi_2)$ be the number of intersections $(\CS_1,\CS_2)\in\CI$ such that $C_l(\CS_1)\le\xi_1$ and $C_l(\CS_2)\le\xi_2$ and let $R(\xi_1,\xi_2)$ be the number of relays between segments $(\CS_1,\CS_2)$ such that $C_l(\CS_1)\le\xi_1$ and $C_l(\CS_2)\le\xi_2$. One should have: 

\beq\label{eq:eq_fitting_relays2}
\frac{R(\xi_1,\xi_2)}{N(\xi_1,\xi_2)}=\Theta\left((\xi_1\xi_2)^{-d_r/2}\right)
\eeq
 and from here get the fractal dimension of the relay process.
 
\subsection{Data Fitting Examples}
Using public measurements \cite{dataAdelaide_relays}, we show that the data validates the hyperfractal scaling of relays repartition with density and length of streets. While traffic data is becoming accessible, the exact length of each street  is  difficult to find, therefore the fitting has been done manually. 

Figure \ref{fig:traffic_lights1} shows a snapshot of the traffic lights locations in a neighborhood of Adelaide, together with traffic densities on the streets, when available.
As the roadside infrastructure for V2X communications has not been deployed yet or not at a city scale, we will use traffic light data as an example for RSU location. It is acceptable to assume the RSUs will have similar placement, as, themselves, traffic lights are considered for the location of the RSUs on having radio devices mounted on top of them \cite{traffic_RSU}. 
\begin{figure}\centering
\begin{subfigure}{0.45\textwidth}
\includegraphics[scale=0.3] {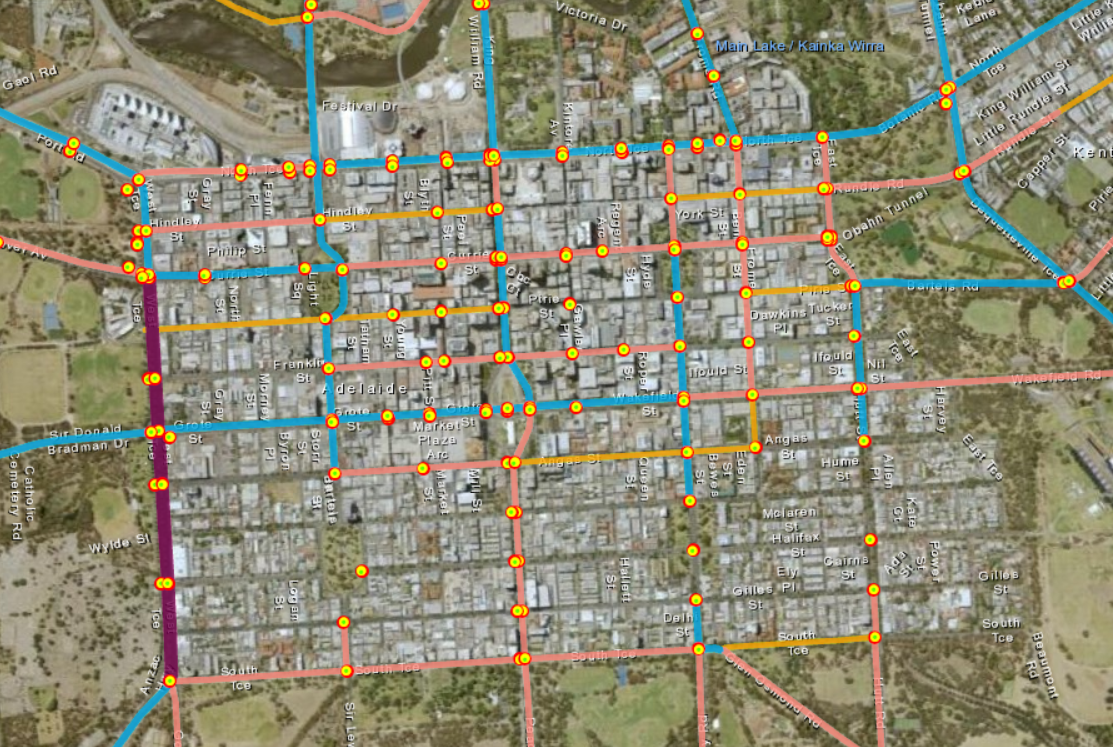}
\caption{Traffic lights data in Adelaide.} \label{fig:traffic_lights1}
\vspace{-0.2cm}
\end{subfigure}
\begin{subfigure}{0.45\textwidth}
\includegraphics[scale=0.45,  trim={1cm 9cm 0cm 10cm}]{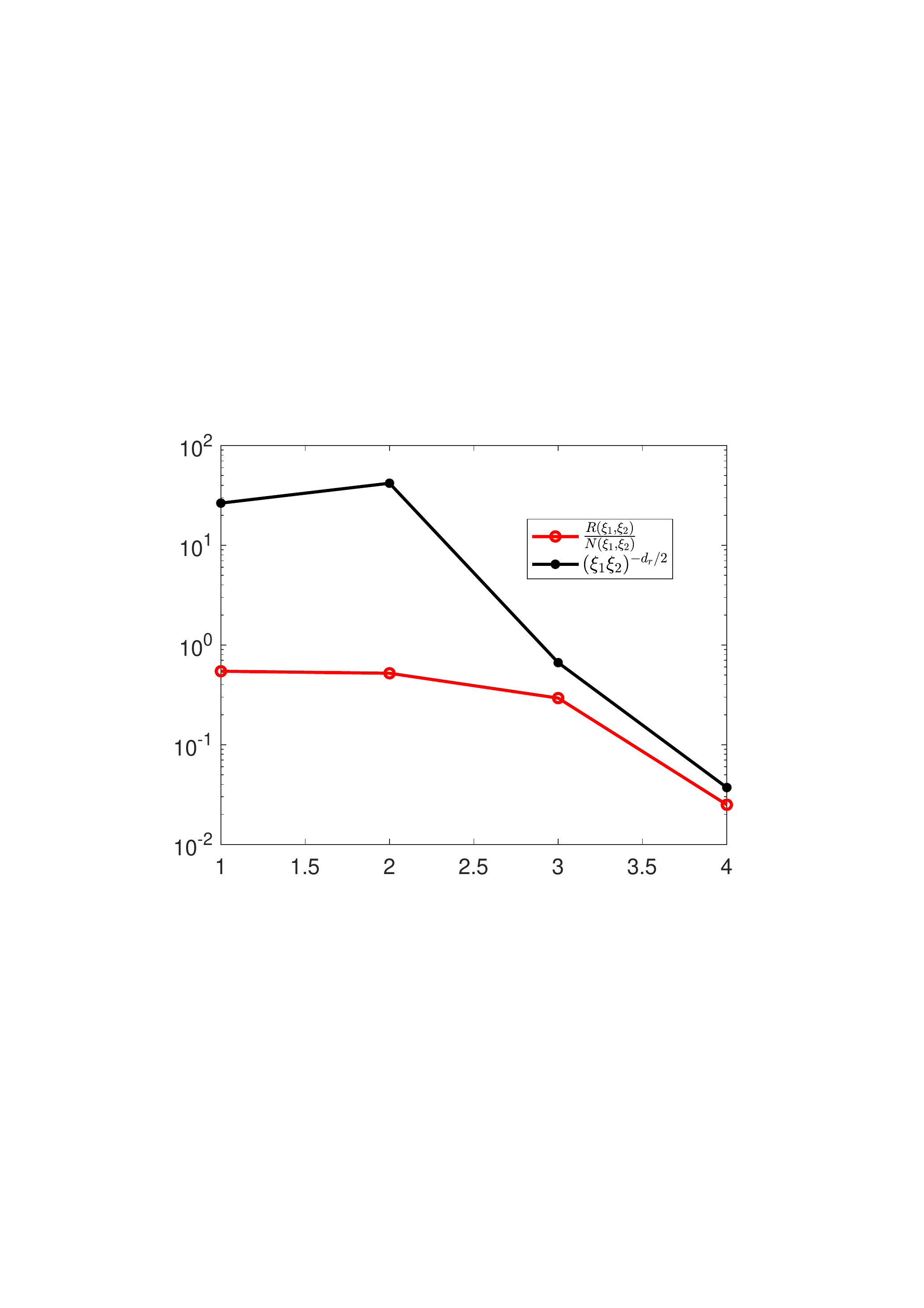}
\caption{\centering Computation of $d_r$}
\label{fig:fitting1}
\end{subfigure}
\caption{Data fitting for Adelaide}
\end{figure}
 By applying the described fitting procedure and using equation (\ref{eq:eq_fitting_relays2}) the estimated fractal dimension of the traffic lights distribution in Adelaide is $d_r=3.5$. 
In Figures \ref{fig:fitting1} we show the fitting of the data for the density repartition function.  

Note that it is the asymptotic behavior of the plots that are of interest (i.e., the increasing accumulated distance with decreasing density therefore decreasing the probability of having a relay installed) since the scaling property comes from the roads with low density, thus the convergence towards the rightmost part of the plot is of interest.


\section{Numerical Evaluation}\label{simulations}
We evaluate the accuracy of the theoretical findings in different scenarios by comparing them to results obtained by simulating the events in a two-dimensional network. We developed a MatLab discrete time event-based simulator following the model presented in Section~\ref{model}. 
\begin{figure}[tb]\centering
\includegraphics[scale=0.4,  trim={1cm 9.3cm 0cm 8.7cm}] {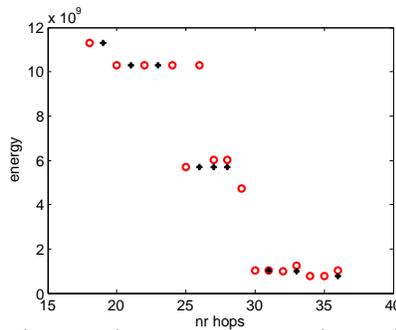}
\caption{Minimum accumulated end-to-end energy  versus hops  for a transmitter-receiver pair (fixed and allowed number of hops in red circles, and maximum number of hops in black stars).} \label{fig:energy_pair}
\vspace{-0.2cm}
\end{figure}
The length of the map is 1000 and, therefore, $\Pm$ is just $1000^\delta$, where $\delta$ is the pathloss coefficient that will be chosen to be $2,3$ or $4$, in line with millimeter-wave propagation characteristics.
Figure~\ref{fig:energy_pair} shows the trade-offs between accumulated end-to-end energy and hop count for a transmitter-receiver pair by selecting randomly pairs of vehicles in a hyperfractal map with $n=800$, pathloss coefficient $\delta=4$, fractal dimension of nodes $d_F=4.33$ and fractal dimension of relays $d_r=3$. The plot shows the minimum accumulated energy for the end-to-end transmission for a fixed and allowed number of hops, $k$, in red circle markers. Note that the energy does not decrease monotonically as forcing to take a longer path may not allow to take the best path. However when considering the minimum accumulated energy of all paths {\em up to a number of hops}, the black star markers in Figure~\ref{fig:energy_pair}, the energy decreases and exhibits the behavior claimed in Theorem~\ref{theo_energy1}. That is, the minimum accumulated energy is indeed decreasing when the number of hops is allowed to grow (and the end-to-end communication is allowed to choose longer, yet cheaper, paths).

Let us further validate Theorem~\ref{theo_energy1} through simulations performed for 100 randomly chosen transmitter-receiver pairs in hyperfractal maps with various configurations. We run simulations for different values of the number of nodes,  $n=800$ nodes and $1000$ nodes respectively, different values of pathloss, $\delta=2$ and $\delta=3$ and different configurations of the hyperfractal map. The setups of the hyperfractal maps are: node fractal dimension $d_F=4.33$ and relay fractal dimension $d_r=3.3$ for the first setup and $d_F=3.3$ and $d_r=2.3$ for the second setup.  

\begin{figure*}[h!]
\begin{subfigure}[t]{0.2\textwidth} \label{800_linear}
\includegraphics[scale=0.3, trim=5.5cm 8cm 0cm 9cm]{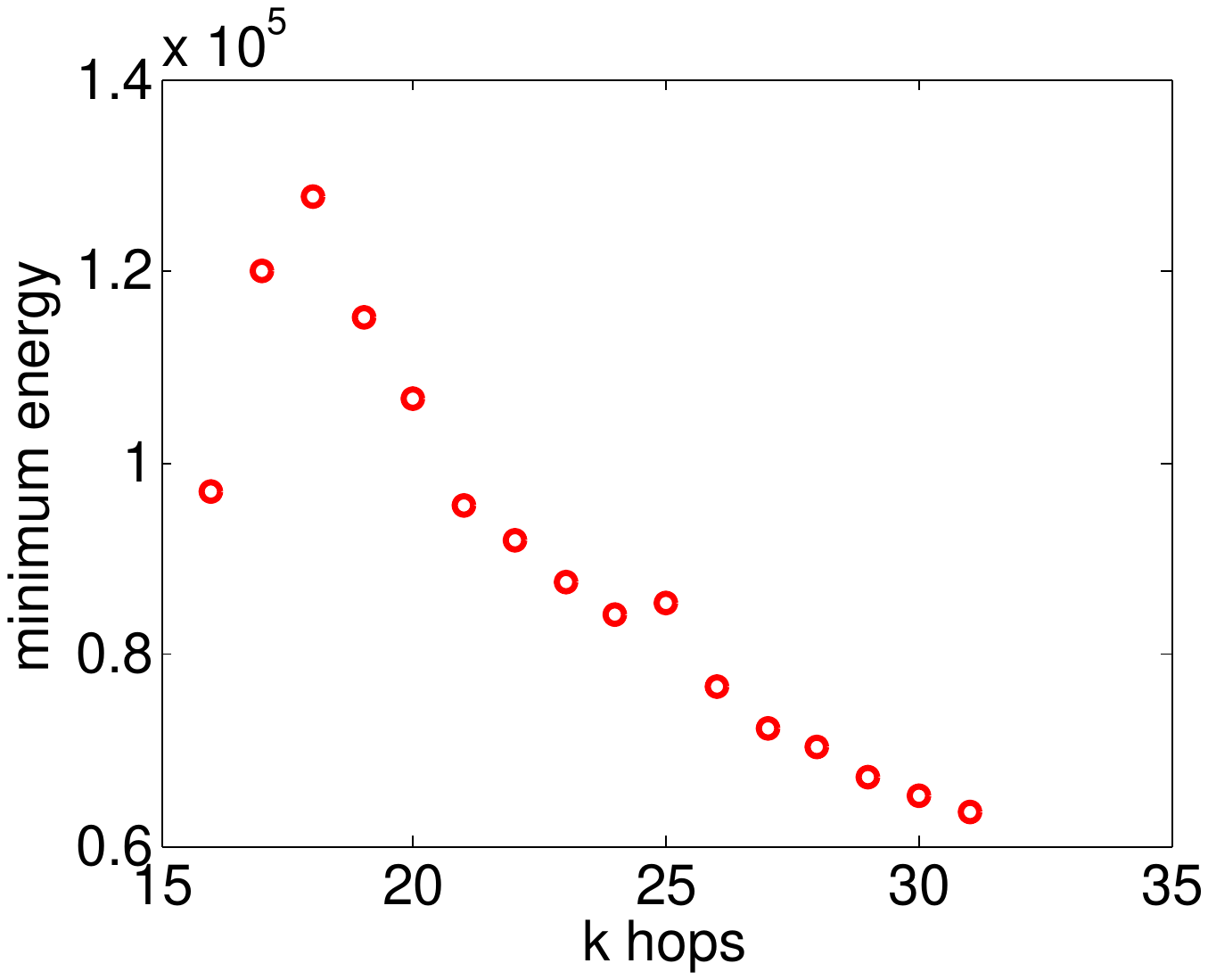}
\hfill
\includegraphics[scale=0.3, trim=5.5cm 9cm 0cm 8cm]{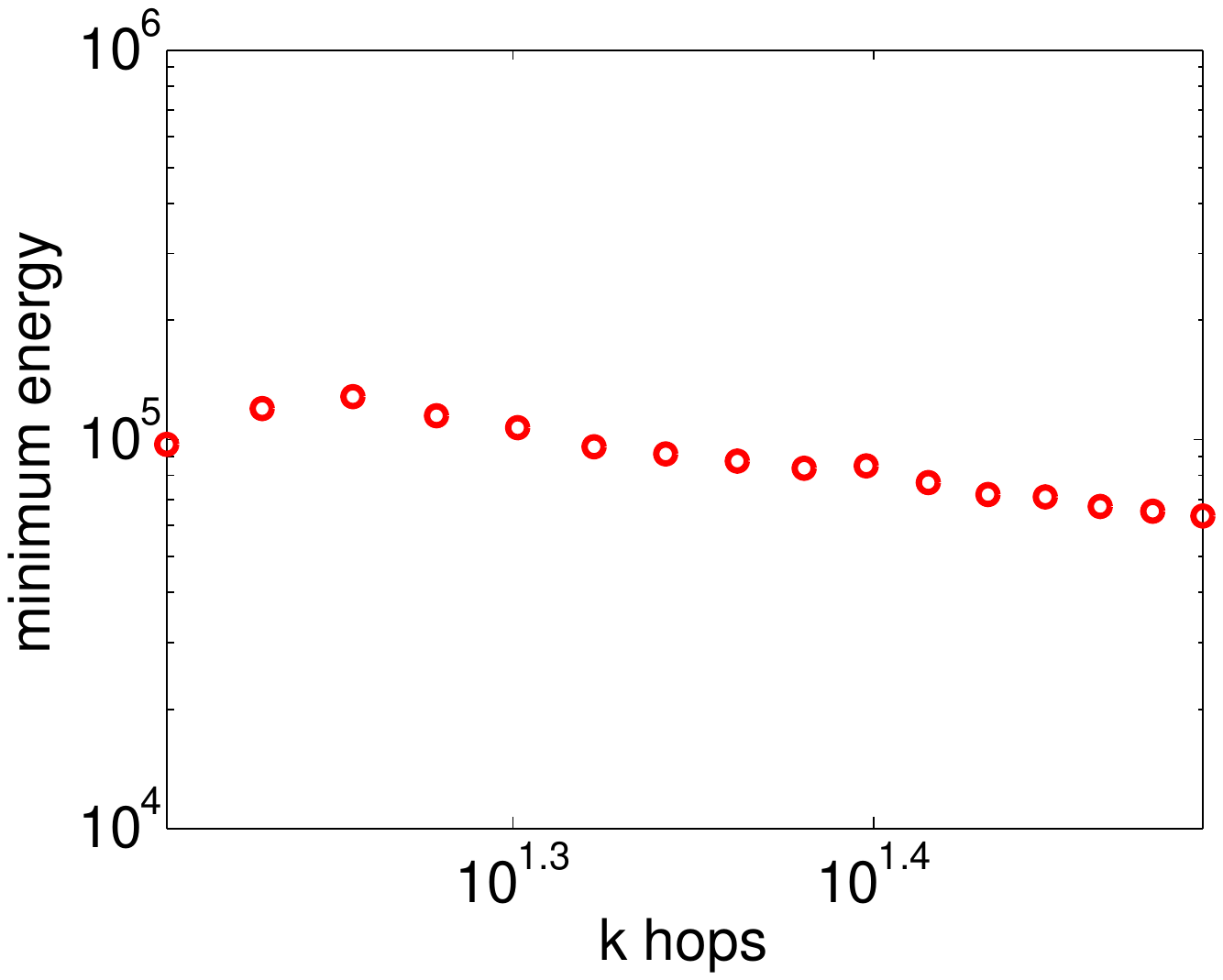}
\caption{\centering $(d_F,d_r,n)=4.3,3.3,800$}
\end{subfigure}\label{800_log}
\begin{subfigure}[t]{0.2\textwidth}  \label{1000_linear}
\includegraphics[scale=0.3, trim=3.5cm 9cm 15cm 9cm]{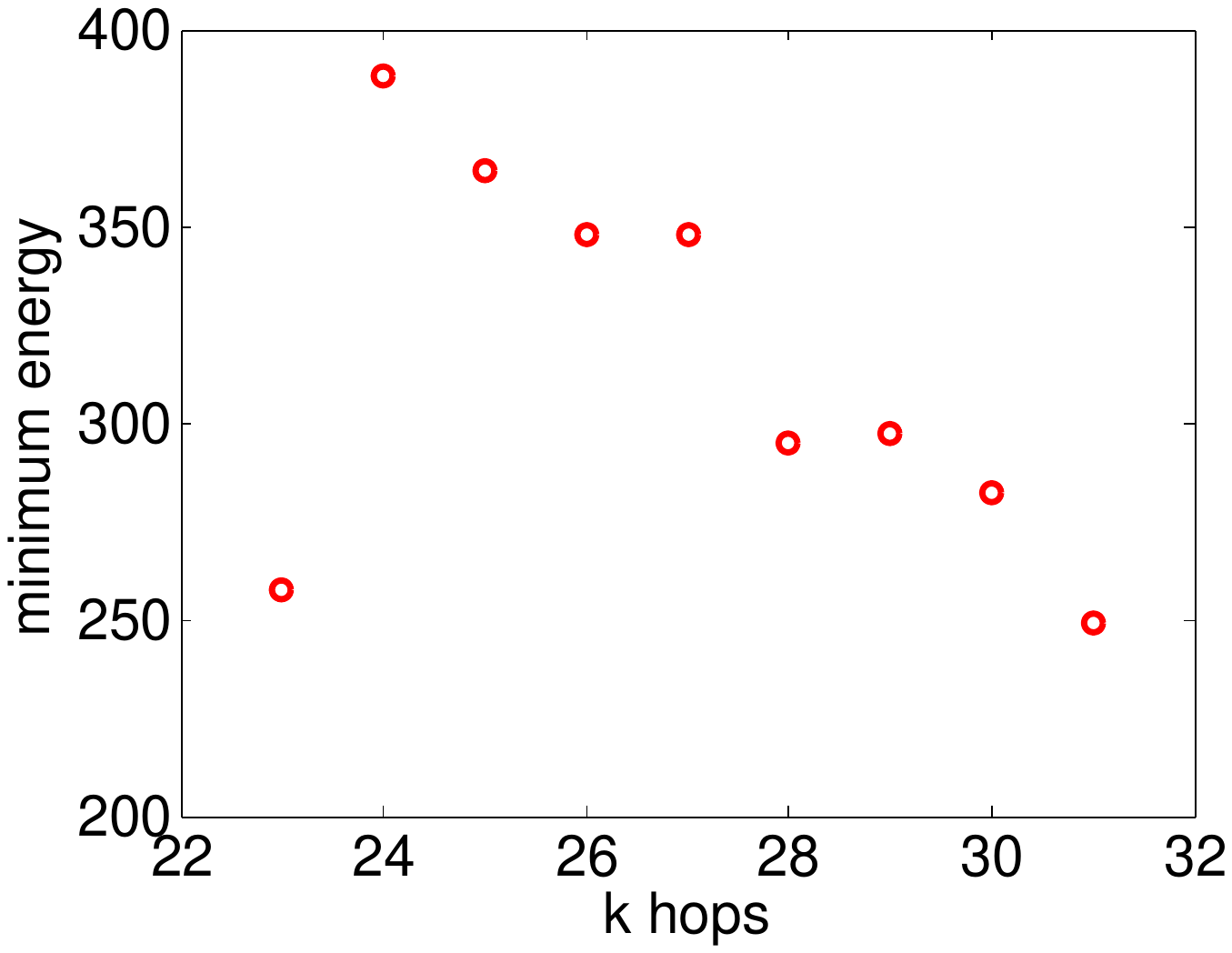}
\includegraphics[scale=0.3, trim=3.5cm 9cm 0cm 7cm]{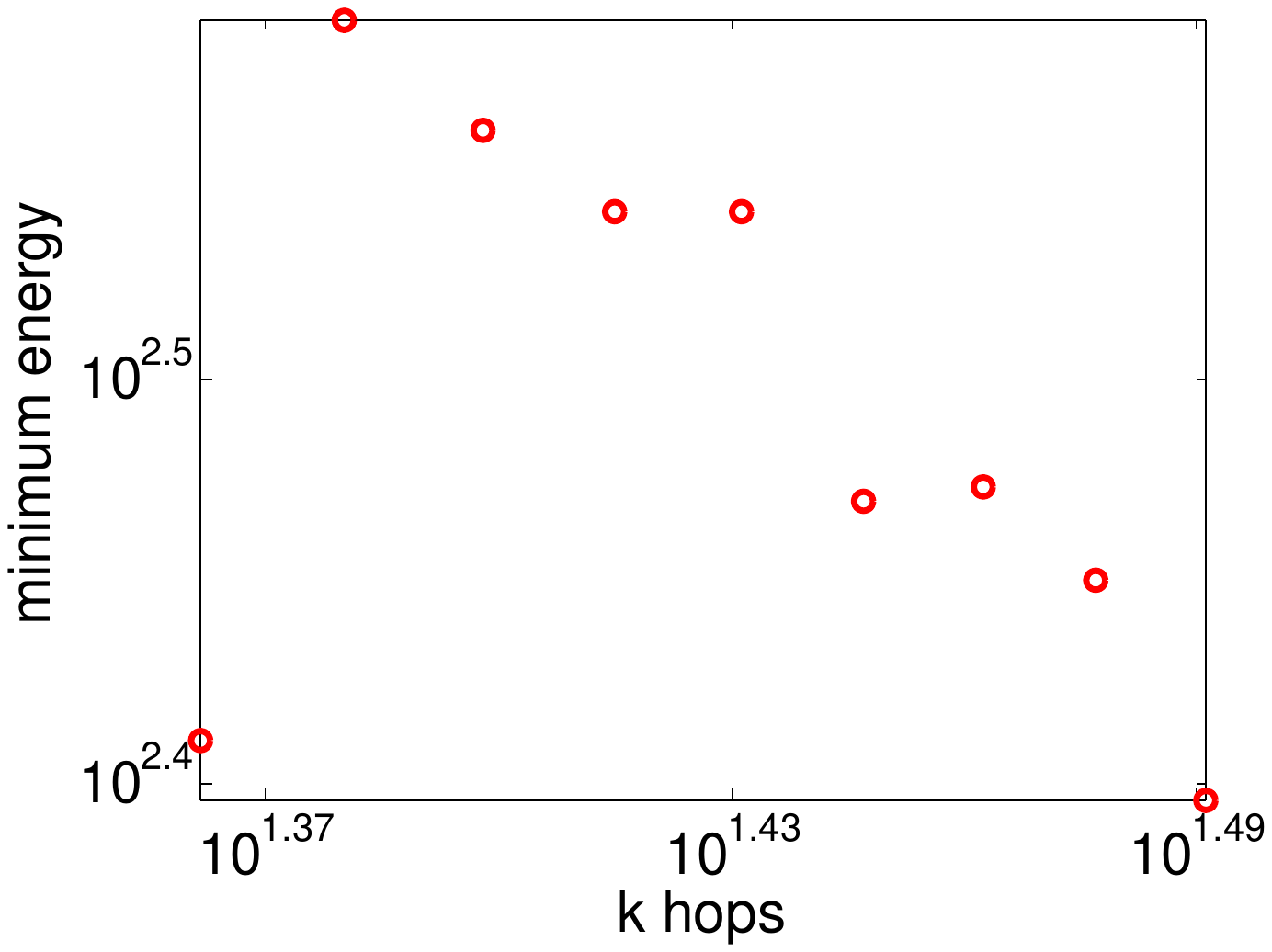}
\caption{\centering $(d_F,d_r,n)=4.3,3.3,1000$}
\end{subfigure}
\begin{subfigure}[t]{0.2\textwidth} \label{800_linear_2}
\includegraphics[scale=0.3, trim=1.5cm 9cm 0cm 9cm]{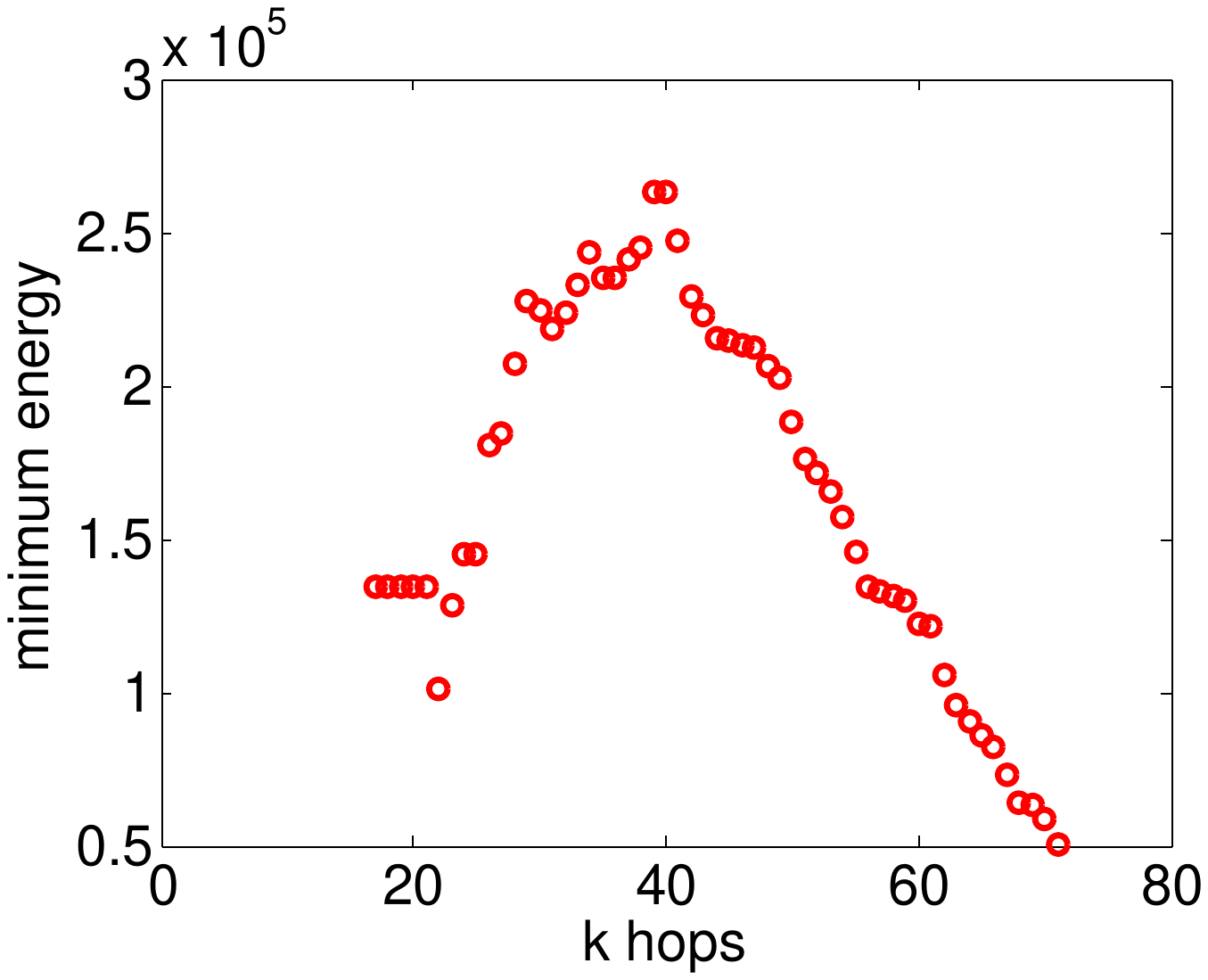}
\includegraphics[scale=0.3, trim=1.5cm 9cm 0cm 7cm]{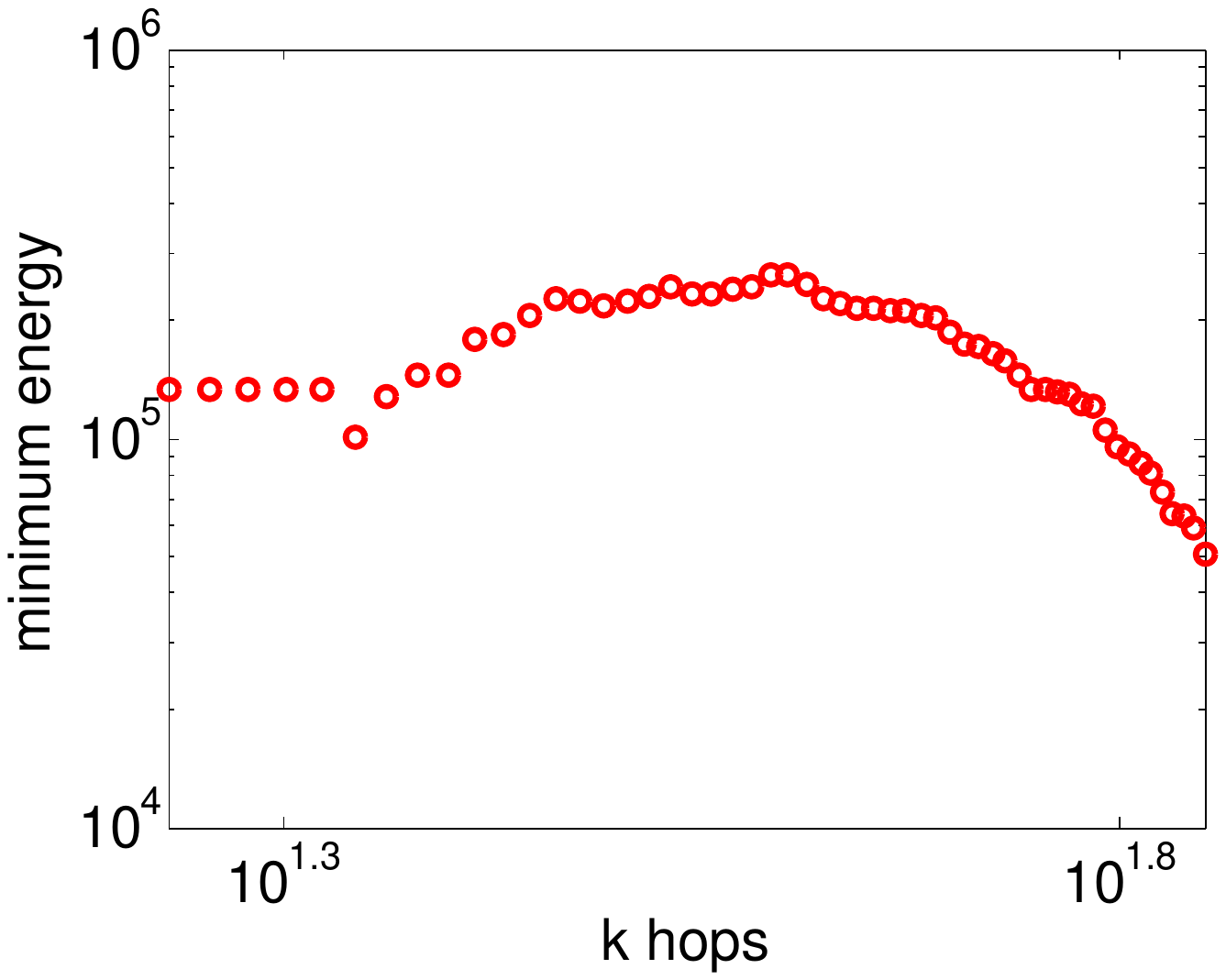}
\caption{\centering $(d_F,d_r,n)=3.3,2.3,800$}
\end{subfigure}
\begin{subfigure}[t]{0.2\textwidth} 
\includegraphics[scale=0.3, trim=0cm 19cm 0cm 3cm]{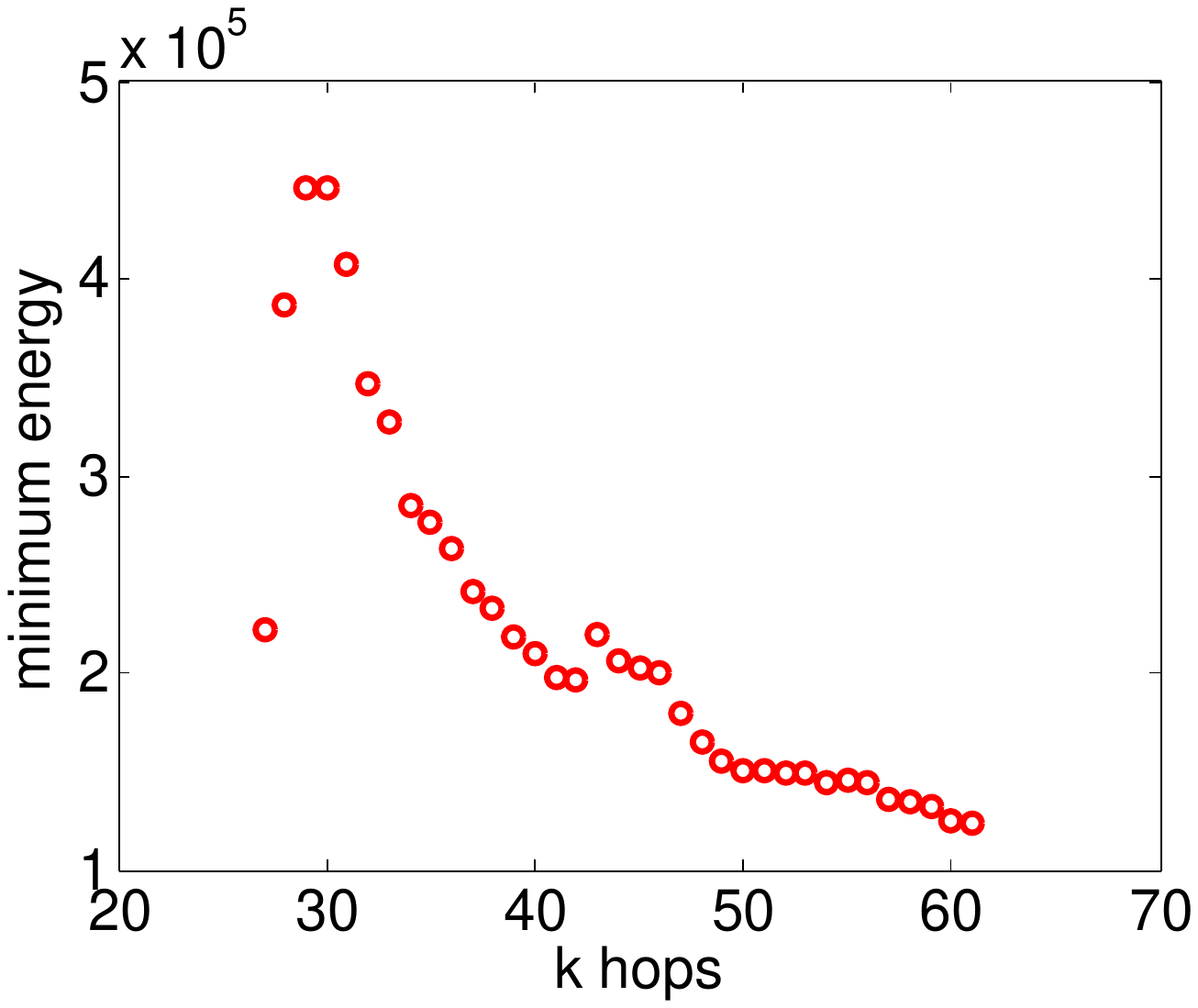}
\vspace*{0.52cm}
\includegraphics[scale=0.3, trim=-0.5cm 11cm 7cm -3cm]{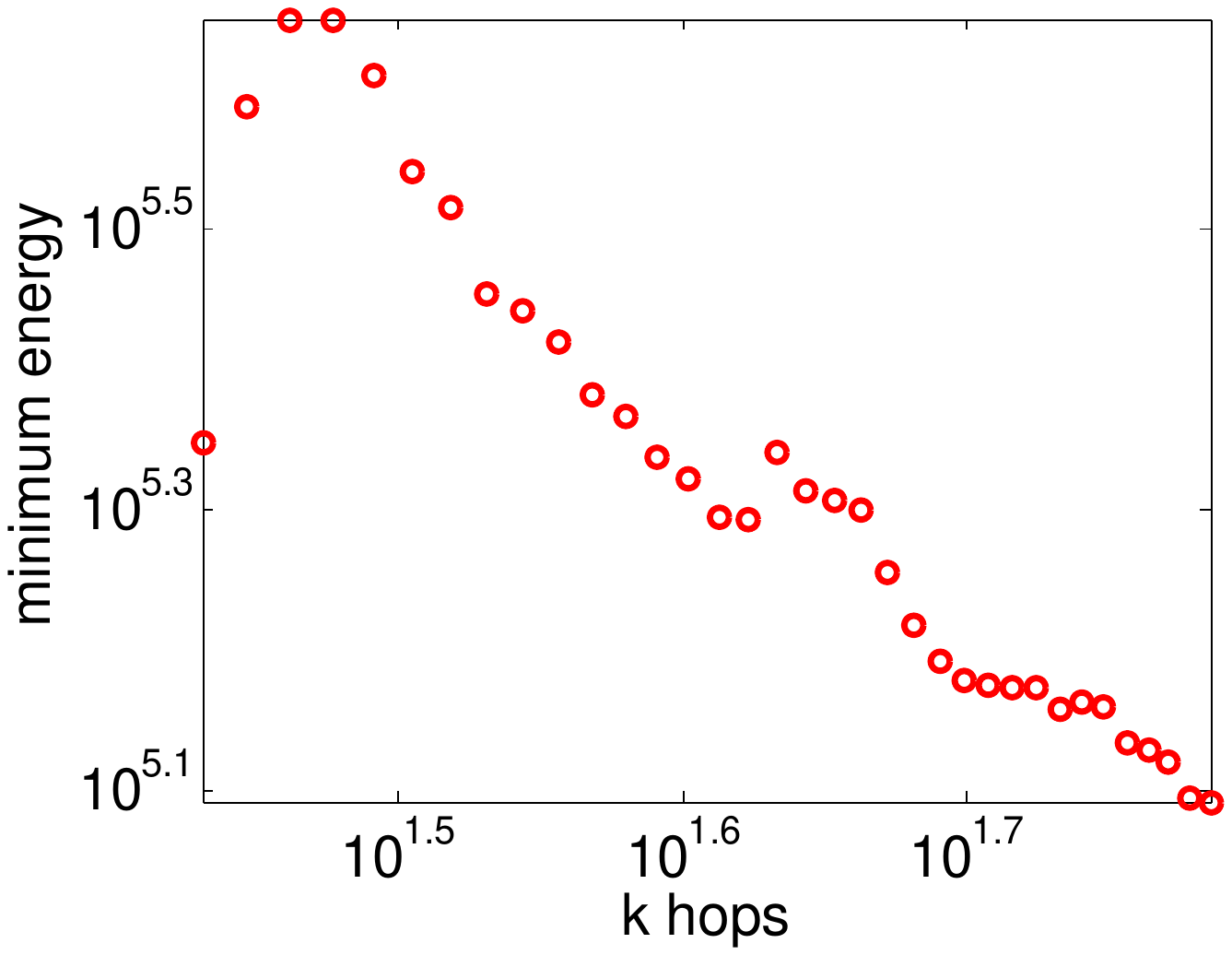}
\caption{\centering $(d_F,d_r,n)=3.3,2.3,1000$}
\end{subfigure}
\caption{Minimum accumulated end-to-end energy versus hops, averaging over 100 transmitter-receiver pairs, $\delta=2$, linear scale left side of sub-figures, logarithmic scale right side of sub-figures}
\label{fig:all_energy1}
\end{figure*}
The  results exhibited in Figure \ref{fig:all_energy1}  are obtained by computing, for each of the transmitter-receiver pair, the minimum accumulated end-to-end energy for a path smaller than $k$, then averaging over the 100 results.
The left-hand sides of the Figures \ref{fig:all_energy1}~(a) and \ref{fig:all_energy1}~(b) show the variation of the minimum path accumulated energy for the path with the increase of the number of hops in a hyperfractal setup of $d_F=4.33$ and $d_r=3$ for $n=800$ in \ref{fig:all_energy1}~(a) and $n=1000$ in \ref{fig:all_energy1}~(b). 
The figures illustrate that, indeed, allowing the hop count to grow decreases the energy considerably. The decay of the maximum accumulated energy with the allowed number of hops is even more visible in logarithmic scale in the right side of the same figures.

The decays remain substantial when changing the hyperfractal setup to $d_F=3.2$, $d_r=2.3$. Figures~\ref{fig:all_energy1}~(c) and \ref{fig:all_energy1}~(d)  show the results for $n=800$ and $n=1000$ in the new setup. The decay is dramatic when looking in logarithmic scale. Even though there can be oscillations around the linearly decreasing characteristic, as seen in Figure \ref{fig:all_energy1}~(d), left-hand side, the global behavior stays the same, decreasing, as better noticed in logarithmic scale in Figure~\ref{fig:all_energy1}~(d), right-hand side.

When changing the pathloss coefficient to $\delta=3$, the effect of Theorem~\ref{theo_energy1} remains, as illustrated in Figure~\ref{fig:all_energy2} for a hyperfractal setup of $d_F=4.33$, $d_r=3$, $n=800$ nodes.  

\begin{figure}[tb]\centering
\begin{subfigure}[t]{0.4\textwidth} 
\includegraphics[scale=0.38, trim=1cm 9cm 17cm 9cm]{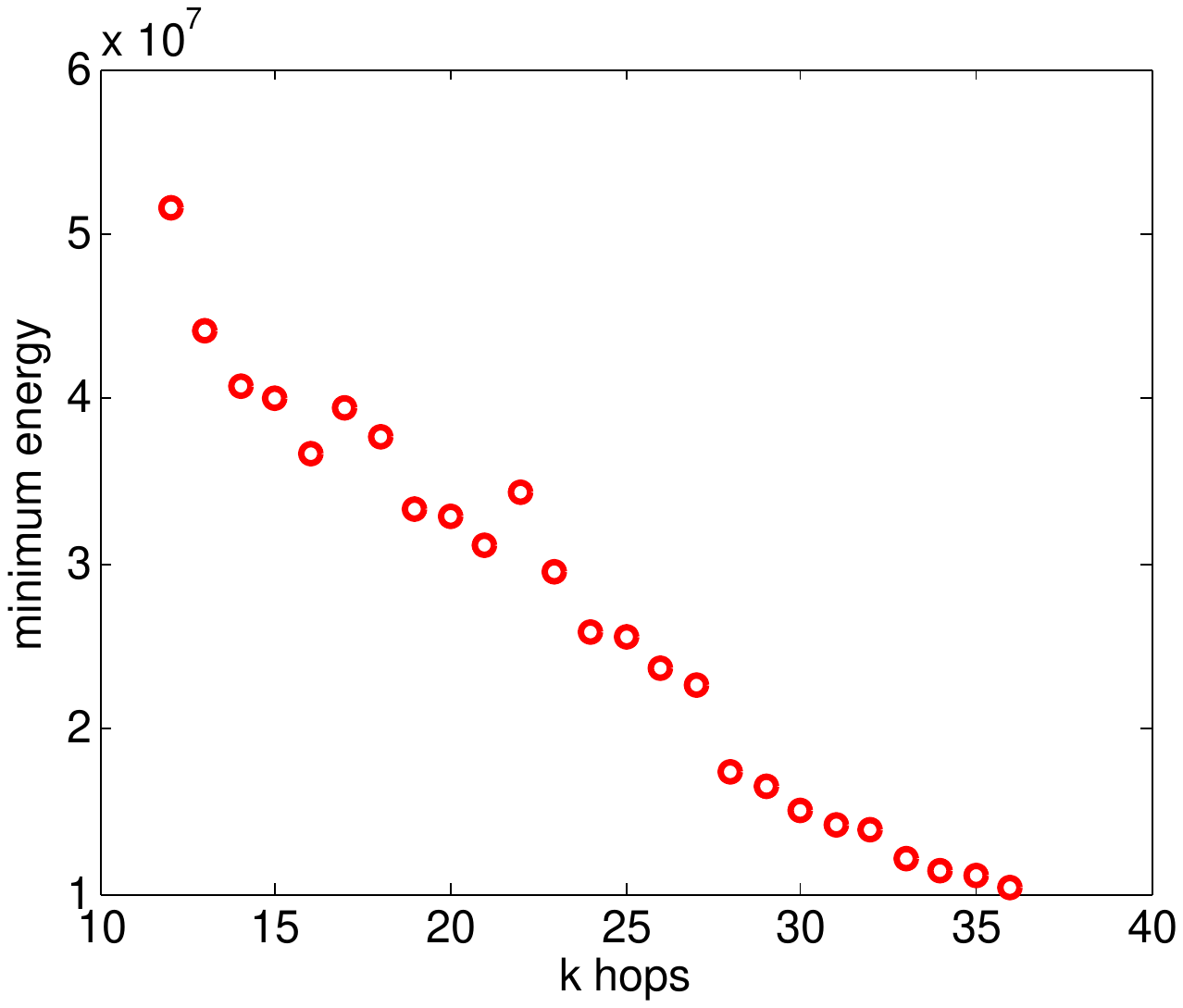}
\caption{\centering Linear}
\end{subfigure}
\begin{subfigure}[t]{0.4\textwidth} 
\includegraphics[scale=0.38, trim=1.3cm 9cm 19cm 9cm]{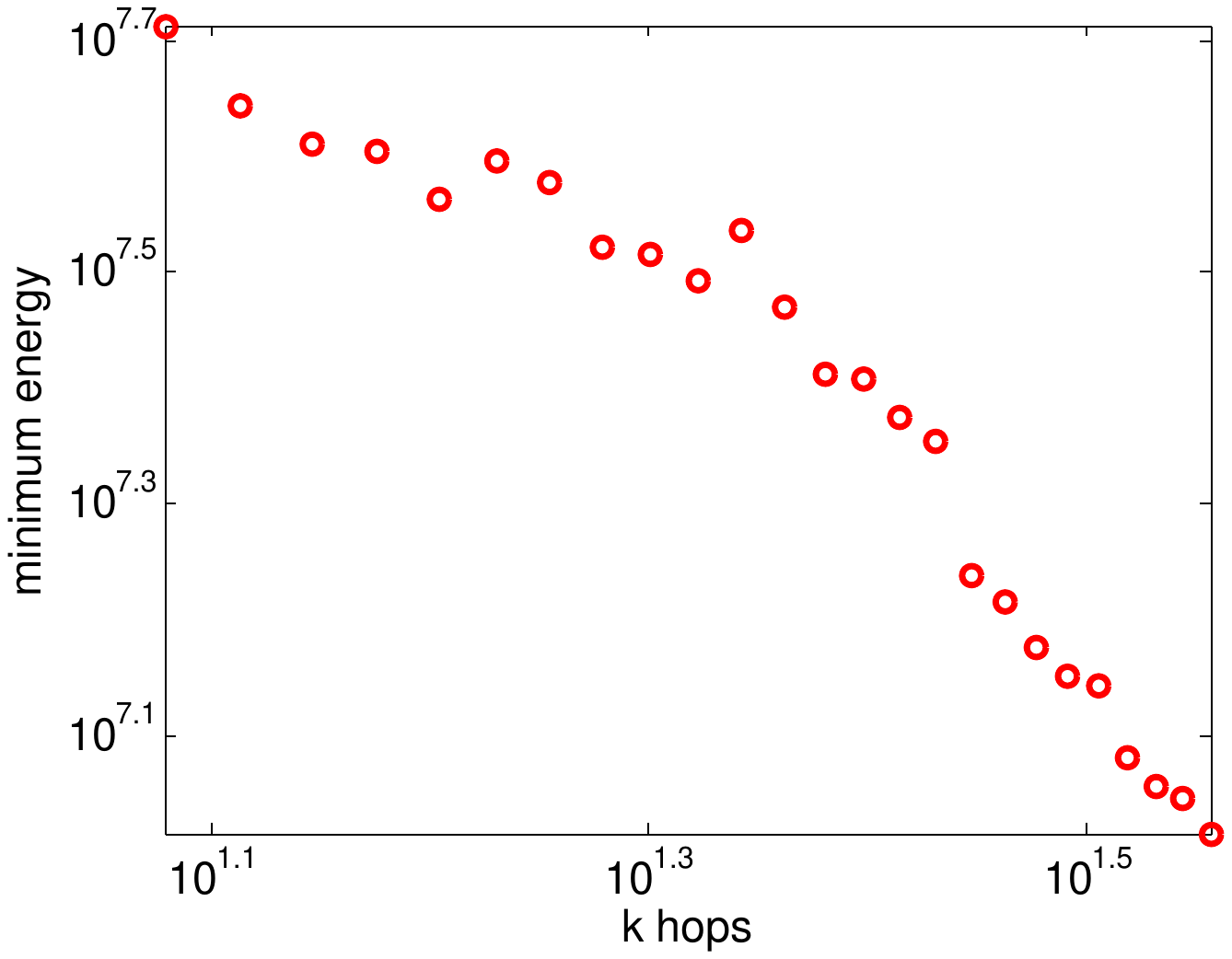}
\caption{\centering Logarithmic scale}
\end{subfigure}
\caption{Minimum accumulated end-to-end energy versus hops, averaging over 100 transmitter-receiver pairs, $\delta=3$}
\label{fig:all_energy2}

\end{figure}

To validate the results of Theorem~\ref{theo_energy2} on the variation of 
path length with the imposed constraint on  maximum energy per node, we choose randomly 100 transmitter-receiver pairs belonging to the central cross and compute the shortest path by applying a constraint on the maximum transmission energy of nodes belonging to the path. The hyperfractal setups are: nodes fractal dimension $d_F=3.3$, relays fractal dimension $d_r=2.3$, pathloss coefficient $\delta=3$ and we vary the number of nodes, $n$ to be either $n=500$ or $n=800$.  
For both values of $n$, Figure~\ref{fig:maximum}~(a) confirms that decreasing the constraint of path maximum energy increases the path length. 

\begin{figure}[tb]\centering
\begin{subfigure}[tb]{0.4\textwidth}
\includegraphics[scale=0.38,  trim={1cm 8.3cm 0cm 8.7cm}] {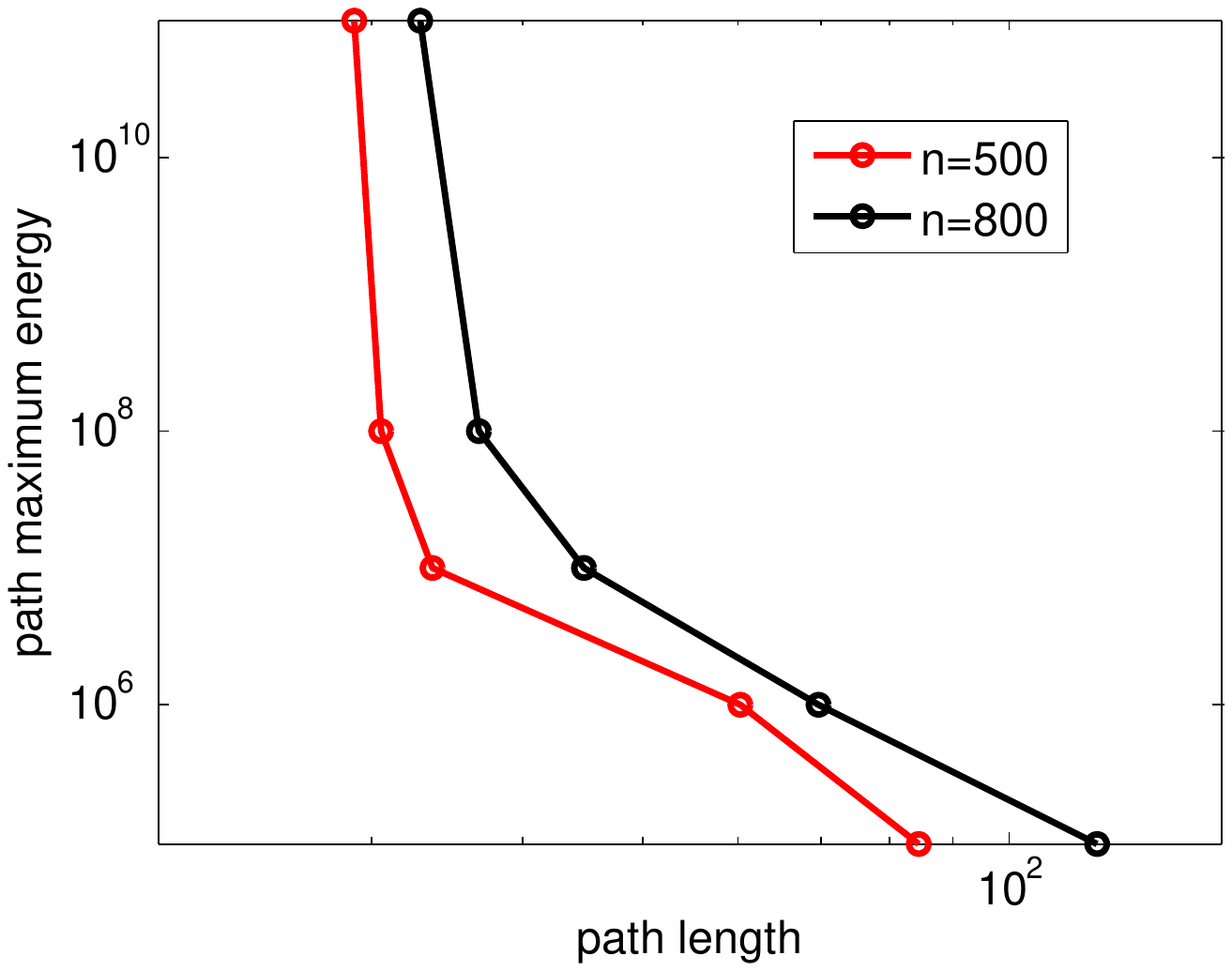}
\caption{\centering $d_F=3.3$, $d_r=2.3$} 
\vspace{-0.2cm}
\end{subfigure}
\begin{subfigure}[tb]{0.4\textwidth}
\includegraphics[scale=0.38,  trim={1cm 8.3cm 0cm 8.7cm}] {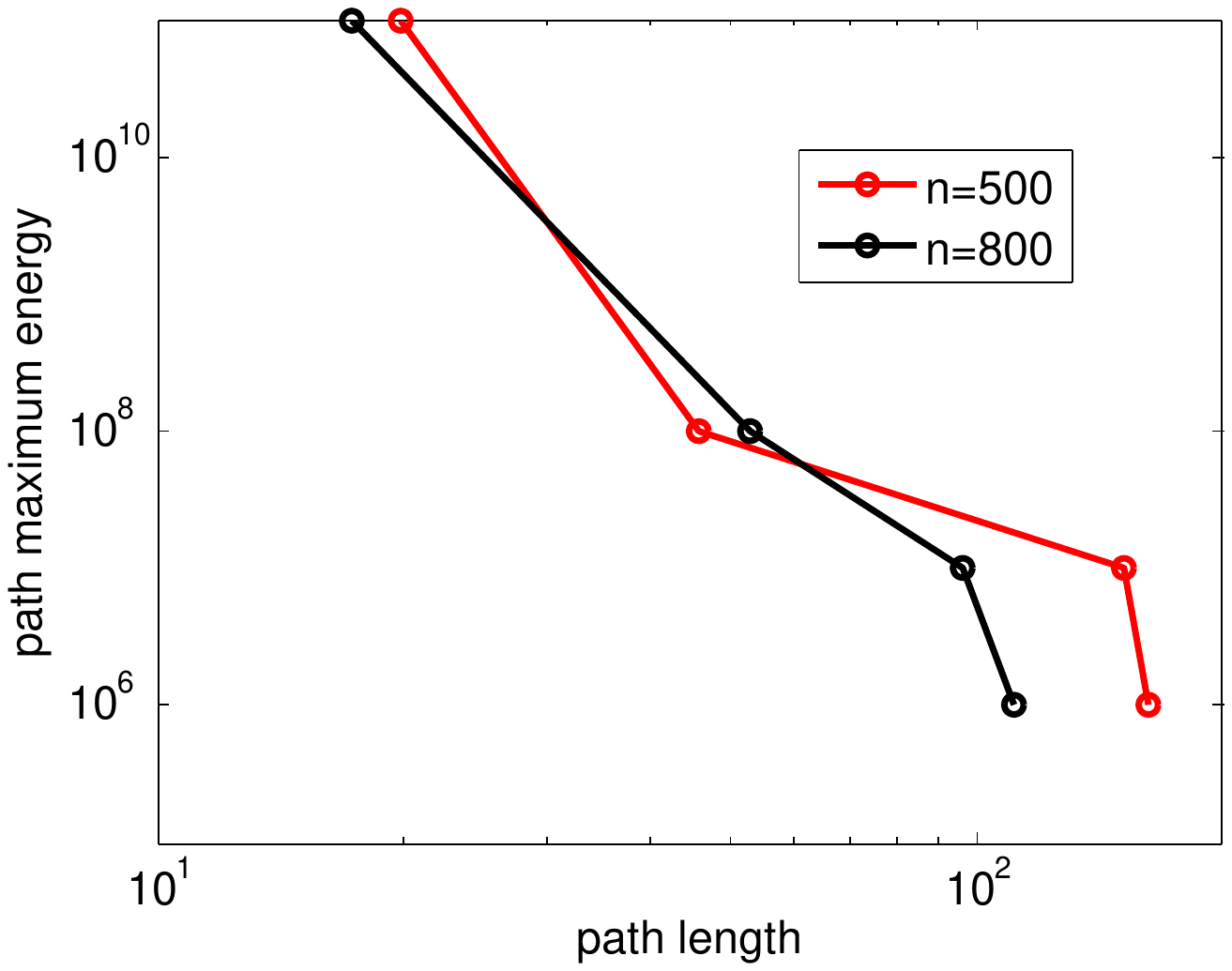}
\caption{\centering $d_F=4.33$, $d_r=3$} 
\end{subfigure}
\caption{Path Maximum Energy trade-off with delay (i.e. path length)}
\label{fig:maximum}
\vspace{-0.2cm}\end{figure}

Changing the fractal dimensions does not change the behavior, as illustrated in~Figure~\ref{fig:maximum}~(b). The hyperfractal configurations are: nodes fractal dimension $d_F=4.33$, relays fractal dimension $d_r=3$, pathloss coefficient $\delta=4$ and we vary the number of nodes, $n$ to be either $n=500$ or $n=800$. 
 Again,  making a tougher constraint on the path maximum energy leads to the increase of the path length, showing that achievable trade-offs in hyperfractal maps of nodes with RSU.

\section{Conclusion}
This paper presented results on the trade-offs between the end-to-end communication delay and energy spent on completing a transmission in millimeter-wave vehicular communications in urban settings by exploiting the ``hyperfractal'' model.  This model captures self-similarity as an environment characteristic. The self-similar characteristic of the road-side infrastructure has also been incorporated. 

Analytical bounds have been derived 
for the end-to-end communication hop count under the constraints of total accumulated energy, and maximum energy per node, exhibiting the achievable trade-offs in a hyperfractal network. The work presented a lower bound on the network throughput capacity with constraints on path energy. 
Further examples of model fitting with data have been given. The analytical results have been validated using a discrete-time event-based simulator developed in Matlab.


%

\appendices
\section{Proofs}
\subsection{Proof of Lemma \ref{lemma:existence}}
\begin{proof}
Let $N_H(n)$ be the number of nodes contained in the street of level $H$.

Let $z$ be a real number. By Chebyshev's inequality, we have:
\begin{equation*}\nonumber
\mathbb{E}[e^{zN_H(n)}]=\left(1+(e^z-1)\lambda_H\right)^n
\end{equation*}
If $z>0$:
$$
P\left(N_H(n)<\frac{n\lambda_H}{2}\right) = P\left(e^{-zN_H(n)}>e^{zn\lambda_H/2}\right)
\le \frac{\mathbb{E}[e^{-zN_H(n)}]}{e^{-zn\lambda_H/2}}
$$
Therefore
\begin{equation*}\nonumber
\frac{\mathbb{E}[e^{-zN_H(n)}]}{e^{-zn\lambda_H}/2}=\exp\left(n\left(\log\left(1+(e^{-z}-1)\lambda_H\right)+z\lambda_H/2\right)\right).
\end{equation*}

For $|z|$ bounded there exists $b>0$ such that $|e^z-1|\le b|z|$ and there exists $c$ such that $e^z-1\le z+cz^2$. For $|x|$ bounded there exists $d$ such that $\log(1+x)\le x-cx^2$. From these steps we obtain that, for sufficiently small $|z|$, one has:
\begin{eqnarray*}
\log\left(1+(e^{-z}-1)\lambda_H\right)+&z\frac{\lambda_H}{2}\le-z\frac{\lambda_H}{2}
+b\lambda_H z^2-c\lambda^2_H z^2
\le -a\lambda_H.
\end{eqnarray*}
which settles that 
\begin{equation}
\frac{\mathbb{E}[e^{-zN_H(n)}]}{e^{-zn\lambda_H/2}}\le e^{-an\lambda_H}.
\end{equation}
The proof of the second part of the lemma proceeds via similar reasoning, by using the inequality:
\begin{equation}
P\left(N_H(n)>2n\lambda_H\right)\le \frac{\mathbb{E}[e^{zN_H(n)}]}{e^{2zn\lambda_H}}.
\end{equation}
\end{proof}

\ifCLASSOPTIONcaptionsoff
  \newpage
\fi



%

%

\bibliographystyle{IEEEtran}
\bibliography{mybib}

\end{document}